\documentclass[11pt]{article}
\usepackage{fullpage}

\usepackage{libertine}

\usepackage{amsmath,amsthm,amssymb}
\usepackage{colortbl}
\usepackage{graphicx}
\usepackage{bm}

\usepackage{hyperref}
\usepackage[svgnames]{xcolor}
\usepackage[capitalise,nameinlink]{cleveref}
\hypersetup{colorlinks={true},linkcolor={DarkBlue},citecolor=[named]{DarkGreen}}
\usepackage{natbib}

\newtheorem{theorem}{Theorem}
\newtheorem{lemma}{Lemma}

\newtheorem{corollary}{Corollary}

\newtheorem{claim}{Claim}

\theoremstyle{definition}
\newtheorem{definition}{Definition}
\newtheorem{remark}{Remark}

\usepackage{multirow}
\usepackage{hhline}

\usepackage{xcolor}
\usepackage{hyperref}
\usepackage{comment,color}

\usepackage[ruled,vlined,linesnumbered]{algorithm2e}
\providecommand{\DontPrintSemicolon}{\dontprintsemicolon}
\makeatletter
\newcommand*{\algotitle}[2]{%
	\stepcounter{algocf}%
	\hypertarget{algocf.title.\theHalgocf}{}%
	\NR@gettitle{#1}%
	\label{#2}%
	\addtocounter{algocf}{-1}%
}
\makeatother

\usepackage{algpseudocode}
\usepackage{authblk}
\usepackage[capitalise]{cleveref}
\crefname{algorithm}{Mechanism}{mech}

\SetCommentSty{mycommfont}

\usepackage{multirow}

\allowdisplaybreaks

\makeatletter
\g@addto@macro\bfseries{\boldmath}
\makeatother

\newcount\Comments
\Comments=1
\definecolor{darkgreen}{rgb}{0,0.7,0}
\newcommand{\kibitz}[2]{\ifnum\Comments=1\textcolor{#1}{#2}\fi}



\newcommand{\SW}{\text{SW}}
\newcommand{\myv}{\mathbf{v}}
\newcommand{\tildev}{\tilde{v}}
\newcommand{\tildevv}{\mathbf{\tilde{v}}}
\newcommand{\hatv}{\hat{v}}
\newcommand{\hatvv}{\mathbf{\hat{v}}}

\newcommand{\sucv}{\boldsymbol{\succ}_\myv}

\newcommand{\V}{\mathcal{V}}
\newcommand{\C}{\mathcal{C}}
\newcommand{\M}{\mathcal{M}}
\newcommand{\Q}{\mathcal{Q}}

\newcommand{\one}{\delta}

\DeclareMathOperator*{\argmax}{arg\,max}

\usepackage{tikz}

\protected\def\verythinspace{%
  \ifmmode
    \mskip0.5\thinmuskip
  \else
    \ifhmode
      \kern0.08334em
    \fi
  \fi
}

\let\OldSqrt\sqrt
\renewcommand{\sqrt}{\mskip-0.8\thinmuskip\OldSqrt}

\usepackage{etoolbox} 




\title{\bf Peeking Behind the Ordinal Curtain: \\ Improving  Distortion via Cardinal Queries\thanks{
A preliminary version of this paper appeared in {\em Proceedings of the 34th {AAAI} Conference on Artificial Intelligence ({AAAI})} ~\citep{conference}.
This work was partially supported by the European Research Council under the ERC projects ALGAME (grant no.~321171), ACCORD (grant no.~639945) and AMDROMA (grant no.~788893), by the NWO under the Gravitation project NETWORKS (grant no.~024.002.003) and the Veni project VI.Veni.192.153, by the Swiss National Science Foundation under contract number 200021\_165522, and by the Italian MIUR PRIN project ALGADIMAR.
}}
\author[1,2]{Georgios Amanatidis}
\author[3]{Georgios Birmpas}
\author[4]{Aris Filos-Ratsikas}
\author[5]{Alexandros A. Voudouris}

\affil[1]{Department of Mathematical Sciences, University of Essex, United Kingdom}
\affil[2]{ILLC, University of Amsterdam, The Netherlands}
\affil[3]{Department of Computer, Control and Management Engineering, Sapienza University of Rome, Italy}
\affil[4]{Department of Computer Science, University of Liverpool, United Kingdom}
\affil[5]{School of Computer Science and Electronic Engineering, University of Essex, United Kingdom}

\date{}



\begin{document}
\maketitle

\begin{abstract}
\noindent Aggregating the preferences of individuals into a collective decision is the core subject of study of social choice theory. In 2006, Procaccia and Rosenschein considered a utilitarian social choice setting, where the agents have explicit numerical values for the alternatives, yet they only report their linear orderings over them. To compare different aggregation mechanisms, Procaccia and Rosenschein introduced the notion of \emph{distortion}, which quantifies the inefficiency of using only ordinal information when trying to maximize the social welfare, i.e., the sum of the underlying values of the agents for the chosen outcome. Since then, this research area has flourished and bounds on the distortion have been obtained for a wide variety of fundamental scenarios. However, the vast majority of the existing literature is focused on the case where nothing is known beyond the ordinal preferences of the agents over the alternatives. In this paper, we take a more expressive approach, and consider mechanisms that are allowed to further ask \emph{a few cardinal queries} in order to gain partial access to the underlying values that the agents have for the alternatives. With this extra power, we design new \emph{deterministic} mechanisms that achieve significantly improved distortion bounds and, in many cases, outperform the best-known randomized ordinal mechanisms. We paint an almost complete picture of the number of queries required by deterministic mechanisms to achieve specific distortion bounds.
\end{abstract}

\section{Introduction}
Social choice theory~\citep{comsocbook2016} is concerned with aggregating the preferences of individuals into a joint decision. In an election, for instance, the winner should represent well (in some precise sense) the viewpoints of the voters. Similarly, the expenditure of public funds is typically geared towards projects that increase the well-being of society. Most traditional models assume that the preferences of individuals are expressed through \emph{ordinal preference rankings}, where each agent sorts all alternatives from the most to the least favorable according to her. Underlying these ordinal preferences, it is often assumed that there exists a \emph{cardinal} utility structure, which further specifies the intensity of the preferences \citep{vnm,BM:01,Barbera98}. That is, there exist numerical values that indicate how much an agent prefers an outcome to another. Given this cardinal utility structure, usually expressed via \emph{valuation functions}, one can define meaningful quantitative objectives, with the most prominent one being the maximization of the \emph{utilitarian} (or \emph{social}) \emph{welfare}, i.e., the sum of the values of the agents for the chosen outcome.

The main rationale justifying the dominance of ordinal preferences in the classical economics literature is that the task of asking individuals to express their preferences in terms of numerical values is arguably quite demanding for them. In contrast, performing simple comparisons between the different options is certainly more easily conceivable. To quantify how much the lack of cardinal information affects the maximization of quantitative objectives like the social welfare, \citet{procaccia2006distortion} defined the notion of \emph{distortion} for mechanisms as the worst-case ratio between the optimal social welfare (which would be achievable using cardinal information) and the social welfare of the outcome selected by the mechanism, which has access only to the preference rankings of the agents. Following their agenda, a plethora of subsequent works studied the distortion of mechanisms in several different settings, such as normalized valuation functions \citep{caragiannis2011embedding,boutilier2015optimal}, metric preferences \citep{anshelevich2018approximating,anshelevich2017randomized}, committee elections \citep{caragiannis2017subset}, and participatory budgeting \citep{benade2017preference}.

Somewhat surprisingly, the different variants of the distortion framework studied in this rich line of work differentiate between two extremes: we either have complete cardinal information or only ordinal information. Driven by the original motivation for using ordinal preferences, it seems quite meaningful to ask whether improved distortion guarantees can be obtained if one has access to \emph{limited} cardinal information, especially in settings for which the best-possible distortion bounds are already quite discouraging~\citep{boutilier2015optimal}. We formulate this idea via the use of \emph{cardinal queries}, which elicit cardinal information from the agents. These queries can be as simple as asking the value of an agent for a possible outcome, or even asking an agent \emph{whether an outcome is at least $x$ times better than some other outcome}, according to her underlying valuation function. Note that questions of the latter form are much less demanding than eliciting a complete cardinal utility structure, and thus are much more realistic as an elicitation device (see also the discussion below).

In this paper, we enhance the original distortion setting of \citet{procaccia2006distortion} and \citet{boutilier2015optimal} on single winner elections, by allowing the use of cardinal queries. In their setting, there are $n$ agents that have cardinal values over $m$ alternatives, and the goal is to elect a single alternative that (approximately) maximizes the social welfare, while having access \emph{only} to ordinal information. \citet{procaccia2006distortion} proved that no deterministic mechanism can achieve a distortion better than $\Omega(m)$ when agents have \emph{unit-sum} normalized valuation functions (i.e., the sum of the values of each agent for all possible alternatives is $1$), which was later on improved to $\Omega(m^2)$ by \citet{caragiannis2017subset}. Under the same assumption, \citet{boutilier2015optimal} proved that the distortion of any (possibly randomized) mechanism is between $\Omega(\sqrt{m}\verythinspace)$ and $O(\sqrt{m} \cdot \log^{*} m)$. Here we show how -- with only a limited number of cardinal queries -- deterministic mechanisms can significantly outperform any mechanism that has access only to ordinal information, even randomized ones.

\subsection{Our Contributions}
We initiate the study of trade-offs between the number of cardinal queries per agent that a mechanism uses and the distortion that it can achieve. In particular, we show results of the following type: 
\begin{quote}
\emph{The distortion $\mathcal{D}(\M)$ of a mechanism $\M$ that makes at most $\lambda$ queries per agent is $O(g(m, \lambda))$.}
\end{quote}
What our results suggest is that we can drastically reduce the distortion by exploiting only a small amount of cardinal information.

\subsubsection*{Query Model}
We consider two different types of cardinal queries, namely \emph{value queries} and \emph{comparison queries}. 
\begin{itemize}
	\item A \emph{value query} takes as input an agent $i$ and an alternative $j$, and returns the agent's value for that alternative. 
	\item A \emph{comparison query} takes as input an agent $i$, two alternatives $j,\ell$ and a real number $d$ and returns ``yes'' if the value of agent $i$ for alternative $j$ is at least $d$ times her value for alternative $\ell$, and ``no'' otherwise.
\end{itemize}
Note that value queries are qualitatively stronger than comparison queries, as they reveal much more detailed information. On the other hand, comparison queries are quite attractive as an elicitation device, since the cognitive complexity of the question that they pose is not much higher than that of forming a preference ranking. Additionally, comparison queries are conceptually similar to the idea of the original utility framework defined by \citet{vnm}. The idea there is that a cardinal scale for utility is possible because agents are capable of not only performing comparisons between alternatives, but also between lotteries over alternatives. For example, an agent $i$  should be able to tell whether she prefers alternative $a$ with certainty, or alternative $b$ with probability $1/2$ (where the remaining probability is assigned to a dummy alternative for whom all agents have value  $0$). Assuming \emph{risk-neutrality}, this is equivalent to asking the comparison query with parameters $(i,a,b,1/2)$.

\subsubsection*{Results and Techniques}
 We warm-up in \cref{sec:prefix} by using $\lambda$ simple prefix value queries per agent (i.e., ask her at the first $\lambda$ positions of her preference ranking). By selecting the alternative with the highest social welfare restricted to the query answers (the \emph{revealed welfare}), we obtain a linear improvement in the distortion, specifically $1+(m-1)/\lambda$. We show that this result is asymptotically optimal, among all mechanisms that use $\lambda$ prefix value queries per agent.

In \cref{sec:constant}, we devise a class of more sophisticated mechanisms that achieve much improved trade-offs between the distortion and the number of queries. In particular, our class contains
\begin{itemize}
	\item a mechanism that achieves \emph{constant} distortion using at most $O(\log^2{m})$ queries per agent, and
	\item a mechanism that achieves a distortion of $O(\sqrt{m}\verythinspace)$ using $O(\log{m})$ queries, matching the performance of the best possible randomized mechanism in the setting of \citep{boutilier2015optimal}, and outperforming all known randomized mechanisms for that setting.
\end{itemize} 
Our mechanisms are based on a binary search procedure, which for every agent finds the last alternative $\alpha$ in the agent's preference ranking such that the agent's value for $\alpha$ is at least $1/k$ times the value for her most-preferred alternative $\alpha^{*}$, for some chosen parameter $k$. Then, the mechanism \emph{simulates} the value of the agent for all alternatives that the agent ranks between $\alpha^{*}$ and $\alpha$ by her value for $\alpha$, and outputs the alternative that maximizes the \emph{simulated welfare}. By repeatedly applying this idea for appropriately chosen values of $k$, we explore the trade-offs between the distortion and the number of queries, when the latter range from $\log m$ to $\log^2 m$ per agent.
In \cref{sec:comparisons} we extend the above ideas to show that the mechanism which achieves a constant distortion using $O(\log^2 m)$ value queries, can actually be transformed into a mechanism which uses the same number of \emph{comparison queries}. In particular, we show how to approximate an agent's value for her most-preferred alternative using only $O(\log^2 m)$ comparison queries. 

In \cref{sec:lower-bounds} we present several lower bounds on the possible achievable trade-offs between the number of queries and distortion. These bounds follow by explicit instances where we carefully define a single ordinal preference profile as well as the cardinal information that may be revealed by the value queries of any mechanism. This information is defined in such a way so that, no matter how the mechanism makes its selection, it is always possible to create a superconstant gap between the optimal social welfare and the social welfare of the winning alternative. 

An overview of our main results can be found in \cref{table:results}.
We conclude the paper in \cref{sec:conclusions} with several interesting open problems, and a particular set of very challenging conjectures about the tight trade-offs between the number of queries and distortion.

\begin{remark}[Normalization assumptions]
We remark here that all of our upper bounds for value queries hold \emph{without any normalization assumption on the cardinal values}, contrary to the results of \citep{procaccia2006distortion} and almost all subsequent works in the related literature, which typically assume that values are normalized according to the unit-sum normalization. We do use the unit-sum normalization in \cref{sec:comparisons}, where we use comparison queries.\footnote{Actually, our results hold even if one uses other reasonable normalizations. For example, for the other common normalization assumption in the literature \citep{caragiannis2018truthful,Feige10,filos2014truthful}, the unit-range normalization, where the value of an agent for her most-preferred alternative is $1$ and all other values are in the interval $[0,1]$, the results of \cref{sec:constant} obviously extend verbatim to the case of comparison queries.} For the lower bounds, we prove bounds both for normalized and unrestricted values.
\end{remark}

\begin{remark}[Noisy queries]
Throughout this work we implicitly assume that agents can accurately answer all value queries. In fact, this is not necessary for any of our positive results! That is, we may assume that the answers to the queries are \emph{noisy}, e.g., because it requires extra effort for the agents to precisely determine these answers. As long as each inaccurate answer is at most a (multiplicative) constant factor away from the truth, all our upper bound proofs go through, at the expense of worse constants. Note that lower bounds are stronger when proven for exact queries, as is the case here.
\end{remark}

\renewcommand{\arraystretch}{1.5}
\setlength{\arrayrulewidth}{1pt}
\begin{table}[]
	\centering
{\small 	\begin{tabular}{|c|c|c|}
		\hline
		\textbf{Number of queries}  & \textbf{Upper Bounds} & \textbf{Lower Bounds} \\ \hline
		{$0$} {(ordinal, deterministic)} &  \cellcolor{lightgray!50!white}$O(m^2)$ \citep{caragiannis2011embedding} & 	\cellcolor{lightgray!50!white}$\Omega(m^2)$\ \  \citep{caragiannis2017subset} \\ \hline
		{$0$} {(ordinal, randomized)} & 	\cellcolor{lightgray!50!white}$O(\sqrt{m} \log^{*}m)$ \ \ \citep{boutilier2015optimal} & 	\cellcolor{lightgray!50!white}$\Omega(\sqrt{m}\verythinspace)$\ \ \citep{boutilier2015optimal}  \\ \hline
		\multirow{2}{*}{{$1$ (value)}}& \multirow{2}{*}{$O(m)$\ \ [{$1$-PRV}, \cref{thm:prefix-upper}]}
		&$\Omega(m)$\ \  [\cref{thm:lower-one-unrestricted}] \\\hhline{~~-}
		& & 	\cellcolor{lightgray!50!white} $\Omega(\sqrt{m}\verythinspace)$\ \  [\cref{thm:lower-one-unitsum}]\\ \hline 
		{$\lambda \geq 2$  (value)} &      $O(m/\lambda)$\ \  [{$\lambda$-PRV}, \cref{thm:prefix-upper}] & $\Omega(m^{1/2(\lambda+1)})$\ \  [\cref{cor:lower-general-unrestricted}]                      \\ \hline
		
		
		{$O(\log m)$ (value)}       &    $O(\sqrt{m}\verythinspace)$\ \  [{$O(1)$-ARV}, \cref{cor:ARV}] &     
		\multirow{4}{*}{$\Omega(1)$}      \\ \cline{1-2}
		{$O(k \log m)$ (value)}       &    $O(m^{1/(k+1)}\verythinspace)$\ \  [{$k$-ARV}, \cref{thm:ARV-upper}] & \\	\cline{1-2}	 
		{$O(\log^2m)$ (value)}       &       $O(1)$\ \  [{$O(\log m)$-ARV}, \cref{cor:ARV}]  &            \\ \hhline{--~}
		{$O(\log^2m)$ (comparison)}  &      	\cellcolor{lightgray!50!white}$O(1)$\ \  [{$O(\log m)$-ARV}, \cref{cor:ARV-comparison}]      &   \\ \hline          
	\end{tabular}}
	\caption{A table showing the most important results in the paper. All our results are for \emph{deterministic} mechanisms. Results for unit-sum valuation functions are highlighted; everything else is for unrestricted valuation functions.}
	\label{table:results}
\end{table}

\subsection{Related Work}

The distortion framework was introduced by \citet{procaccia2006distortion}, and has been studied subsequently in a series of papers, most prominently by \citet{boutilier2015optimal}, who consider a general social choice setting, under the unit-sum normalization; this general model was also previously studied by \citet{caragiannis2011embedding} who considered different methods to translate the values of the agents for the alternatives into rankings (embeddings), and more recently by \citet{FMV19} who bounded the distortion of deterministic mechanisms in district-based elections. A related model is that of distortion of social choice functions in a metric space, which was initiated by \citet{anshelevich2018approximating}, and has since then been studied extensively~\citep{anshelevich2017randomized,goel2017metric,anshelevich2018ordinal,
borodin2019primarily,cheng2017people,cheng2018multiple,fain2018random,feldman2016facility,FRV20, ghodsi2019abstension,gkatzelis2020resolving,goel2018relating,gross2017agree,kempe2019analysis,
munagala2019improved, skowron2019approval}
In this setting, there is no normalization of values (or costs), but the valuation (or cost) functions are assumed to satisfy the triangle inequality. Similar distortion frameworks, in a metric space or under normalizations, have also been studied for other related problems, such as matching and clustering \citep{anshelevich2016blind,abramowitz2017utilitarians,anshelevich2017tradeoffs,anshelevich2018ordinal,Aris14}. It is worth noting that following the conference version of the present paper, recent works applied our research agenda and studied the tradeoffs between the number of queries and the distortion for the \emph{one-sided matching} problem and some of its generalizations~ \citep{amanatidis2020few,menon2020matching} .

Two related variants of the problem are $k$-winner elections, where $k$ alternatives are to be elected instead of one~\citep{caragiannis2017subset,benade2019low}, and participatory budgeting, where every alternative is associated with a cost, and one or more alternatives have to be elected in a manner that ensures that the total cost does not exceed a pre-specified budget constraint~\citep{lu2011budgeted}. \citet{benade2017preference} studied the $k$-winner participatory budgeting problem, but interestingly, they considered a more expressive model for the preferences of the agents, compared to simple preference rankings. In particular, they considered the knapsack votes model of \citep{goel2016knapsack}, rankings by value, rankings by value-for-money and threshold votes. While the first three are not very relevant for our purposes, the latter one can be thought of as a different type of (more expressive) query, in which a numerical value is specified, and every agent is asked to return the set of alternatives for which her value is above this threshold. Note that such threshold queries are in general incomparable to the value and comparison queries we consider in this work, as they elicit {\em aggregate} information with a single question.

Following \citet{benade2017preference}, \citet{bhaskar2018truthful} used a related model with thresholds drawn from $\mathcal{U}[0,1]$ to construct a randomized social choice function that approaches a distortion of $1$ with high probability as the number of agents approaches infinity. Their results are incomparable to ours for three main reasons: (a) threshold queries are stronger than value queries, as explained above, (b) their mechanisms are randomized, while we only consider deterministic mechanisms, and (c) their guarantees are obtained in the limit and in particular are not meaningful for cases where $m < n^3$; in constrast, we consider general values of $n$ and $m$. \citet{bhaskar2018truthful} also consider a different type of query, referred to as ``binary threshold'': an alternative and a threshold are selected and the voter is asked if her value for that alternative exceeds the threshold or not. Clearly, this is weaker than a value query. However, the algorithm of \cite{bhaskar2018truthful} requires $\Omega(m)$ binary threshold queries to achieve a constant approximation to the distortion; achieving constant distortion with $\Omega(m)$ value queries in our model is trivial.

Recently, \citet{mandalefficient,mandal2020optimal} studied a model conceptually related to ours, in which agents are asked to provide cardinal information, but there is a restriction on the number of bits to be communicated to the mechanism. Hence, they studied trade-offs between the number of transmitted bits and distortion. This is markedly quite different from what we do here, as a query in their setting has access to the (approximate) values of an agent for many alternatives simultaneously, and is therefore much too expressive when translated to our setting. On the other hand, the setting of \citet{mandalefficient,mandal2020optimal} does not assume ``free'' access to the ordinal preferences, which are also considered as part of the elicitation process. In particular, this implies that lower or upper bounds from our setting cannot be translated to theirs or vice-versa. 
We consider our work complementary to theirs, as they are mostly motivated by the computational limitations of elicitation (corresponding to a communication complexity approach), whereas we are motivated by the cognitive limitations of eliciting cardinal values, as often highlighted in the classical literature of social choice (corresponding to a query complexity approach). That being said, as we discussed earlier, our analysis is to some extent robust to noisy queries (which, for instance, could correspond to value declarations truncated to the $k$'th bit of their representation) and therefore the communication complexity of the problem could be studied on top of our query model.\footnote{Since the communication complexity is not the focus of this work, we have chosen to not pursue this further than the observation that constant inaccuracy leads to qualitatively similar results.}

Finally, at the same time and independently of the conference version of our work, \citet{abramowitz2019awareness} also introduced a setting in which the mechanism designer has access to some cardinal information on top of the ordinal preferences. This enables the design of improved mechanisms in terms of distortion. While the motivation of their paper is the same as ours, the approaches are inherently different. 
Besides the fact that \citet{abramowitz2019awareness} study a metric distortion setting, whereas we study a general setting with valuation functions that which are either unrestricted or normalized according to unit-sum, there is another fundamental distinction. The access to the cardinal information in \citep{abramowitz2019awareness} is not via queries. Instead, it is given explicitly as part of the input in terms of a threshold $\tau$, which allows the designer to know the number of agents for which the distance to an alternative $a$ is at most $1/\tau$ times their distance to another alternative $b$. 

\section{The model}\label{sec:model}

We consider a standard \emph{social choice setting}, in which there is a set $A$ of $m$ alternatives and a set $N$ of $n$ agents. Our goal is to elect a \emph{single} alternative based on the preferences of the agents, which are expressed through {\em valuation functions} $v_i: A \rightarrow \mathbb{R}_{\geq 0}$ that map alternatives to non-negative real numbers. 
For notational convenience, we use $v_{ij}$ instead of $v_i(j)$ to denote the \emph{cardinal value} of agent $i$ for alternative $j$, and refer to the matrix $\myv = (v_{ij})_{i \in N, j \in A}$ as a \emph{valuation profile}. By $\mathbf{V}$ we denote the set of all possible valuation profiles.
Clearly, the valuation function $v_i$ also defines a {\em preference ranking} for agent $i$, i.e., a linear ordering $\succ_{i}$ of $A$ such that $j \succ_i j'$ if $v_{ij} \geq v_{ij'}$; we assume that ties are broken according to a deterministic tie-breaking rule, e.g., according to a fixed global ordering of the alternatives.\footnote{It would be equivalent to allow ties at this point, get pre-linear orderings instead, and leave the tie-breaking to the mechanisms when necessary.}
We refer to $\sucv  = (\succ_{1}, \ldots, \succ_{n})$ as an (ordinal) \emph{preference profile}.

In this work, we consider the following two families of valuation functions:
\begin{itemize}
\item \emph{Unrestricted valuation functions}, which may take any non-negative real values.

\item \emph{Unit-sum valuation functions}, which are such that $\sum_{j \in A} v_{ij} =1$ for every agent $i \in N$.
\end{itemize}

The \emph{social welfare} of alternative $j \in A$ with  respect to $\myv$ is the total value of the agents for $j$: $\SW(j \,|\, \myv)=\sum_{i \in N} v_{ij}$. Our goal is to output one of the alternatives who maximize the social welfare, i.e., an alternative in $\argmax_{j \in A} \SW(j \,|\,\myv)$. This is clearly a trivial task if one has full access to the valuation profile. However, we assume \emph{limited access} to these cardinal values. In particular, we assume that we only have access to the preference profile $\myv_{\succ}$ and can also learn cardinal information by asking queries. We consider two types of queries: \emph{value queries} that reveal the value of an agent for a given alternative, and \emph{comparison queries} that reveal whether the value of an agent for an alternative is a multiplicative factor larger than her value for some other alternative.

\begin{definition} Given a preference profile, a query about the underlying cardinal values is called 
	\begin{itemize}
		\item A \emph{value query}, if it takes as input an agent $i$ and an alternative $j$ and returns the agent's value $v_{ij}$ for that alternative. This is implemented via the function $\V : N \times A \rightarrow \mathbb{R}_{\geq 0}$.
		We say that \emph{agent $i$ is queried at position $k$}, if alternative $j$ is ranked $k$-th in $\succ_{i}$ and we make the query $\V(i,j)$.
		
		\item A \emph{comparison query}, if it takes as input an agent $i$, two alternatives $j$, $\ell$ and a real number $d$, and returns \textsc{yes} if $v_{ij}\geq d \cdot  v_{i\ell}$, and \textsc{no} otherwise. This is implemented via the function $\C: N \times A \times A \times \mathbb{R}_{\geq 0} \rightarrow \{\textsc{yes},\textsc{no}\}$.
	\end{itemize} 
\end{definition}

Clearly, value queries reveal more information than comparison queries. Note that the information obtained by a comparison query can be obtained by at most two value queries. On the other hand, however, without any cardinal information or any normalization assumption, it is impossible to even approximate the information obtained by a value query using only comparison queries. In this sense, value queries are considerably stronger than comparison queries.

\begin{definition}
A \emph{mechanism} $\M=(\Q,f)$ with access to a (value or comparison) oracle takes as input a preference profile $\sucv$ and returns an alternative. In particular, it consists of the following two parts:

\begin{itemize}
	\item An algorithm $\Q$ that takes as input the preference profile $\sucv$, adaptively makes queries to the oracle, and returns the set of answers to these queries.
	\item A mapping $f$ that takes as input the preference profile $\sucv$ and the set $\Q(\sucv)$ of answers to the queries above, and outputs a single alternative $j \in A$. Such a mapping is called a \emph{social choice function}.
\end{itemize} 
\end{definition}
 
By the description of $\Q$ above, it is clear that the mechanism is free to choose the positions at which each agent will be queried, and those can depend not only on $\sucv$, but on the answers to the queries already asked as well. 
The performance of a mechanism is measured by its \emph{distortion}, which is the worst-case ratio---over all possible instances---between the optimal social welfare and the social welfare of the alternative chosen by the mechanism. 

\begin{definition}
The \emph{distortion} of a mechanism $\M$ is 
\[  \mathcal{D}(\M) = \sup_{(N,A,\myv)} \frac{\max_{j \in A}\SW(j\,|\,\myv)}{\SW(\M(\sucv)\,|\,\myv)} \,,  \]
where $\SW(j\,|\,\myv)$ is the social welfare of alternative $j$ given a particular valuation profile,
and $\M(\myv_{\succ})$ is the output of the mechanism on input $\sucv$.
\end{definition}

Throughout our proofs, it will be useful to partition the quantity $\SW(j\,|\,\myv)$, into two separate quantities depending on the cardinal information we  obtain from the queries. This is particularly relevant when we deal with value queries, but even for comparison queries we use a similar decomposition in \cref{sec:comparisons}.

\begin{definition}
The \emph{revealed welfare $\SW_r(j\,|\,\myv)$ of $j$} is the contribution to $\SW(j\,|\,\myv)$ of agents that have been queried for alternative $j$ via value queries, i.e.,  $\SW_r(j\,|\,\myv) = \sum_{i \in N: \V(i, j) \in \Q(\sucv)} v_{ij}$. 
The remaining quantity $\SW(j\,|\,\myv) - \SW_r(j\,|\,\myv)$ is called the \emph{concealed welfare $\SW_c(j\,|\,\myv)$ of $j$}.
\end{definition}

\section{Warm-Up: Mechanisms Using Fixed-Position Value Queries}\label{sec:prefix}
Before we dive into our more technical results, we first warm-up by discussing probably the most obvious class of mechanisms, which query every agent at the first $\lambda \geq 1$ positions; we refer to such queries as {\em prefix}. A particular member of this class is the mechanism that uses the {\em Range Voting} (RV) social choice function to decide the outcome. Formally, RV takes as input the whole valuation profile $\myv$ and elects an alternative $x$ with maximum social welfare: $x \in \argmax_{j \in A} \SW(j\,|\,\myv)$. In our case, since $\myv$ is not fully known, we deploy RV only on the revealed valuation profile, where any unknown value is assumed to be zero. 

To be more specific, let $T_k(j)$ be the set of agents that rank alternative $j \in A$ at position $k \in [m]$. Our mechanism first queries every agent at each of the first $\lambda$ positions of her preference ranking. Then, it elects the alternative $y$ that maximizes the revealed welfare: $y \in \argmax_{j \in A} \SW_r(j\,|\,\myv)$. We refer to this mechanism as {\em $\lambda$-Prefix Range Voting} (\nameref{alg:PRV}).
We remark that a very similar mechanism was independently proposed by \citet{mandalefficient}; the analyses of the two mechanisms follow along the same lines. 

\begin{algorithm}[ht]
	\DontPrintSemicolon 
	\NoCaptionOfAlgo
	\algotitle{$\lambda$-PRV}{alg:PRV.title}

	\For{$j \in A$ }{
		$\SW_r(j \,|\, \myv)= 0$  
	}
	\For{$i \in N$}{
	    \For{$k \in [\lambda]$}{
	    	Ask the query $\V(i, j_i(k))$ to learn $v_{i, j_i(k)}$, where $j_i(k)$ is the $k$-th favorite alternative of  $i$. \;
	    	$\SW_r(j_i(k) \,|\, \myv)= \SW_r(j_i(k) \,|\, \myv) + v_{i, j_i(k)}$  
	    }	
	}
	
	Let $y \in \argmax_{j\in A} \SW_r(j \,|\, \myv)$ be an alternative achieving the best revealed welfare. \;
	\Return $y$

	\caption{Mechanism $\lambda$-PRV$(\sucv)$} 
	\label{alg:PRV} 
\end{algorithm}\medskip 

\begin{theorem}\label{thm:prefix-upper}
The distortion of $\lambda$-\rm{PRV} is $\mathcal{D}(\lambda\text{-\rm{PRV}})\leq 1 + \frac{m-1}{\lambda}$, even for unrestricted valuation functions.
\end{theorem}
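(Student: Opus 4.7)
\medskip

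\textbf{Proof plan.} Fix a valuation profile $\myv$, let $x \in \argmax_{j \in A} \SW(j \mid \myv)$ be an optimal alternative, and let $y$ be the output of $\lambda$-PRV. I will decompose $\SW(x \mid \myv) = \SW_r(x \mid \myv) + \SW_c(x \mid \myv)$ and bound each summand by a suitable multiple of $\SW_r(y \mid \myv) \le \SW(y \mid \myv)$. The bound on the revealed part is immediate from the definition of $y$: since $y$ maximizes the revealed welfare, $\SW_r(x \mid \myv) \le \SW_r(y \mid \myv)$. The rest of the argument is essentially about controlling the concealed welfare of $x$ using the values we already queried.

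\smallskip

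For the concealed part, call $N_x$ the set of agents that rank $x$ in their top $\lambda$ positions; only agents outside $N_x$ contribute to $\SW_c(x \mid \myv)$. For any such agent $i$, $x$ sits at position strictly greater than $\lambda$ in $\succ_i$, so $v_{ix} \le v_{i, j_i(\lambda)}$, and because $v_{i, j_i(1)} \ge \dots \ge v_{i, j_i(\lambda)}$ we have the averaging inequality
\[
v_{i, j_i(\lambda)} \;\le\; \frac{1}{\lambda}\sum_{k=1}^{\lambda} v_{i, j_i(k)}\,.
\]
Summing over $i \notin N_x$ and observing that these agents never query $x$, the double sum on the right-hand side is bounded by $\sum_{j \ne x} \SW_r(j \mid \myv)$. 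Since $y$ maximizes the revealed welfare over all $m-1$ alternatives different from $x$, this quantity is at most $(m-1)\cdot \SW_r(y \mid \myv)$, yielding
\[
\SW_c(x \mid \myv) \;\le\; \frac{m-1}{\lambda}\, \SW_r(y \mid \myv)\,.
\]

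\smallskip

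Combining the two bounds gives $\SW(x \mid \myv) \le \bigl(1 + \tfrac{m-1}{\lambda}\bigr) \SW_r(y \mid \myv) \le \bigl(1 + \tfrac{m-1}{\lambda}\bigr) \SW(y \mid \myv)$, as desired. The main subtlety, and where the savings of $-1$ over the naive bound $1 + m/\lambda$ come from, is to restrict the ``budget'' of queried alternatives to $A \setminus \{x\}$ when bounding $\SW_c(x \mid \myv)$: if $x$ had appeared among the top $\lambda$ of the agents in question, their contribution to the concealed welfare would have been zero to begin with. Note that no normalization of the valuations is used anywhere, so the argument applies to unrestricted valuation functions.
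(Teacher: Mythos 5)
Your proof is correct and follows essentially the same route as the paper's: decompose $\SW(x\mid\myv)$ into revealed and concealed parts, bound the revealed part by $\SW_r(y\mid\myv)$ via optimality of $y$, and bound the concealed part by averaging the top $\lambda$ queried values and noting they all belong to alternatives in $A\setminus\{x\}$. The observation that restricting to $A\setminus\{x\}$ gives the $-1$ saving is exactly the role of the set $A\setminus\{x\}$ in the paper's chain of inequalities.
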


\begin{proof}
Consider some instance with valuation profile $\myv$. Let $x$ be an alternative that maximizes the social welfare according to $\myv$, and let $y$ be the alternative that is elected by $\lambda$-PRV. 
Recall that here the revealed welfare of any alternative $j \in A$ is $\SW_r(j\,|\,\myv) = \sum_{k=1}^\lambda \sum_{i \in T_k(j)} v_{ij}$. 
Since $\SW(y \,|\, \myv) \geq \SW_r(y \,|\, \myv)$, it suffices to show that $\SW(x\,|\,\myv) \leq \big(1 + \frac{m-1}{\lambda}\big)\, \SW_r(y \,|\, \myv)$. To this end, we will bound the revealed and the concealed welfare of $x$ separately. 

Since $y$ is an alternative that maximizes the revealed welfare, we have that $\SW_r(y \,|\, \myv) \geq \SW_r(j \,|\, \myv)$ for every $j \in A$, and therefore 
\begin{align}\label{eq:prefix-opt-revealed}
\SW_r(x \,|\, \myv) \leq \SW_r(y \,|\, \myv)\,.
\end{align}
Now, consider the agents in $\bigcup_{k = \lambda+1}^m T_k(x)$. They are not queried about their value for $x$, and therefore contribute to the concealed welfare of $x$. For every such agent $i$ there exist $\lambda$ different alternatives $j_i(1), \ldots, \allowbreak j_i(\lambda)$ that $i$ ranks above $x$, and for whom she has value $v_{i,j_i(1)}, \ldots , v_{i,j_i(\lambda)} \geq v_{ix}$.\footnote{When the subscripts have subscripts themselves, we follow the common practice of separating them with commas.} Consequently, we have that
\begin{align} \nonumber
\SW_c(x \,|\, \myv) 
&= \sum_{k=\lambda+1}^m \sum_{i \in T_k(x)} v_{ix} 
\leq \sum_{k=\lambda+1}^m \sum_{i \in T_k(x)} \frac{v_{i,j_i(1)} + \ldots + v_{i,j_i(\lambda)}}{\lambda} \\
\label{eq:prefix-opt-concealed}
&= \frac{1}{\lambda} \sum_{j \in A \setminus\{ x \}} \SW_r(j \,|\, \myv) \leq \frac{1}{\lambda} \sum_{j \in A \setminus\{x\}} \SW_r(y \,|\, \myv) 
= \frac{m-1}{\lambda}\, \SW_r(y \,|\, \myv) \,.
\end{align}
The statement now follows by \eqref{eq:prefix-opt-revealed} and \eqref{eq:prefix-opt-concealed}.
\end{proof}

Clearly, the distortion guarantee of $\lambda$-PRV improves linearly in the number of queries $\lambda$. Nevertheless, it is interesting to see for which values of $\lambda$ the mechanism achieves distortion $O(\sqrt{m}\verythinspace)$ and $O(1)$. These are given by the following statement. 

\begin{corollary}\label{cor:prefix}
The distortion of $\lambda$-\rm{PRV} is 
\begin{align*}
\mathcal{D}(\lambda\textrm{-PRV}) =
\begin{cases}
O(\sqrt{m}\verythinspace), & \text{ for } \lambda = \Theta(\sqrt{m}\verythinspace) \\
O(1), & \text{ for } \lambda = \Theta(m)
\end{cases}
\end{align*}
\end{corollary}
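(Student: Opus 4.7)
The plan is to derive this corollary as an immediate substitution into the bound established in \cref{thm:prefix-upper}, which guarantees $\mathcal{D}(\lambda\text{-PRV}) \leq 1 + \frac{m-1}{\lambda}$ for every $\lambda \geq 1$. Since the corollary only asks for asymptotic bounds for two specific regimes of $\lambda$, no additional combinatorial argument is needed; the work reduces to elementary algebra on the ratio $(m-1)/\lambda$.

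First, I would handle the case $\lambda = \Theta(\sqrt{m}\verythinspace)$. By definition of $\Theta$, there exists a constant $c > 0$ such that $\lambda \geq c\sqrt{m}$ for all sufficiently large $m$. Plugging into the upper bound of \cref{thm:prefix-upper} yields $\mathcal{D}(\lambda\text{-PRV}) \leq 1 + \frac{m-1}{c\sqrt{m}} \leq 1 + \frac{1}{c}\sqrt{m}$, which is $O(\sqrt{m}\verythinspace)$. Next, for $\lambda = \Theta(m)$, there is a constant $c' > 0$ with $\lambda \geq c' m$ eventually, so the bound gives $\mathcal{D}(\lambda\text{-PRV}) \leq 1 + \frac{m-1}{c' m} \leq 1 + \frac{1}{c'}$, which is $O(1)$.

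There is no real obstacle here; the only subtlety worth flagging is that the corollary is an upper bound statement, so one does not need to invoke the lower side of the $\Theta(\cdot)$ condition on $\lambda$ — using only that $\lambda$ is at least a constant multiple of $\sqrt{m}$ (respectively $m$) suffices. The proof is therefore two lines of substitution.
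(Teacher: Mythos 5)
Your proof is correct and matches the paper's intended (and essentially only) route: the corollary is obtained by direct substitution of $\lambda = \Theta(\sqrt{m}\verythinspace)$ and $\lambda = \Theta(m)$ into the bound $\mathcal{D}(\lambda\text{-PRV}) \leq 1 + \frac{m-1}{\lambda}$ from \cref{thm:prefix-upper}. Your observation that only the lower side of the $\Theta(\cdot)$ condition is needed is a correct and tidy remark.
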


Next, we show that, in terms of distortion, $\lambda$-PRV is the best possible mechanism among those that make at most $\lambda$ prefix value queries. 

\begin{theorem}\label{thm:prefix-lower}
Any mechanism that makes $\lambda$ prefix value queries per agent has distortion $\Omega(m/\lambda)$, even for unit-sum valuation functions. 
\end{theorem}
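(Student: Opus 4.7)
The plan is to exhibit, for every deterministic mechanism $\M$ making $\lambda$ prefix value queries per agent, a unit-sum instance on which $\M$ has distortion $\Omega(m/\lambda)$. Since prefix queries always return the top $\lambda$ values of each agent, $\M$'s output is a deterministic function of the preference profile $\sucv$ together with these $n\lambda$ revealed values, so any lower bound must exploit that unit-sum valuations consistent with $\sucv$ still leave considerable freedom in the non-queried positions.

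For the base construction, I would take $n=m-1$ agents and a single distinguished alternative $x$, with preferences in a balanced cyclic design: agent $i$ ranks regular alternatives $a_i,a_{i+1},\ldots,a_{i+\lambda-1}$ (indices modulo $m-1$) at the top $\lambda$ positions, $x$ at position $\lambda+1$, and the remaining regulars below. Setting every top-$\lambda$ value to $1/(\lambda+1)$, the unit-sum and non-increasing constraints force each agent's valuation to be $1/(\lambda+1)$ on her top $\lambda+1$ alternatives and zero elsewhere. A direct count then gives $\SW(x\,|\,\myv)=(m-1)/(\lambda+1)$, since $x$ is at position $\lambda+1$ for every agent, while each regular $a_j$ lies in the top $\lambda$ of exactly $\lambda$ agents and therefore has welfare $\lambda/(\lambda+1)$. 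Hence any mechanism whose output is not $x$ on this single instance already has distortion $(m-1)/\lambda=\Omega(m/\lambda)$.

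The remaining obstacle is a mechanism that exploits ordinal information to identify $x$, which is the unique alternative at position $\lambda+1$ for every agent. To rule this out, I would symmetrize the construction: introduce $K=\Theta(m/\lambda)$ candidate ``special'' alternatives $x_1,\ldots,x_K$, each occupying position $\lambda+1$ for a $1/K$-fraction of the agents, and define a family of $K$ valuation profiles, all consistent with the same preference profile and with the same top-$\lambda$ query answers, but each placing the hidden mass so as to boost a different $x_{k^*}$ to welfare $\Omega(n/\lambda)$ while leaving every other alternative with welfare $O(n/m)$. Since the visible information is identical across the family, the deterministic $\M$ produces the same output $y$ on all $K$ instances; a pigeonhole argument then gives some $k^*$ with $y\neq x_{k^*}$, at which the distortion is $\Omega(K)=\Omega(m/\lambda)$.

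The hardest part will be reconciling this symmetrization with the unit-sum and non-increasing constraints: the top-$\lambda$ values are pinned across the whole family, leaving only a budget of $1-\lambda\alpha$ to redistribute among the bottom $m-\lambda$ positions, while each agent's below-top ordering has to be fixed before the adversary chooses which $x_{k^*}$ to boost. My plan is to take $\alpha$ slightly smaller than $1/(\lambda+1)$ to create the needed slack and to order each agent's below-top positions so that all $K$ candidates $x_1,\ldots,x_K$ fall within a common ``plateau region'' of length $\Theta(K)$; the adversary can then slide the plateau's mass onto whichever $x_{k^*}$ it wishes, obtaining the required welfare gap without changing the answer to any of the $\lambda$ prefix queries.
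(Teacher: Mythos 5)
You correctly diagnose why a single ``special'' alternative $x$ at position $\lambda+1$ does not suffice: the mechanism can read $x$ off the ordinal profile and always output it. Hiding the optimum among $K$ ordinally indistinguishable candidates is the right remedy, and the paper uses exactly this device --- but with $K=2$. Your choice $K=\Theta(m/\lambda)$ over-symmetrizes, and the welfare boost to $\Omega(n/\lambda)$ that you claim for $x_{k^*}$ is not achievable.

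Here is the arithmetic that breaks. Each agent's top-$\lambda$ values are pinned at $\alpha\approx 1/(\lambda+1)$, leaving a budget of about $1/(\lambda+1)$ below the top $\lambda$, and the non-increasing constraint caps the value an agent can assign at position $\lambda+j$ at $\frac{1}{(\lambda+1)\,j}$. If $x_{k^*}$ is at position $\lambda+1$ for only $n/K$ agents, those agents contribute at most $\frac{n/K}{\lambda+1}=O(1)$ once $K=\Theta(m/\lambda)$ and $n=\Theta(m)$. Even allowing agents that rank $x_{k^*}$ deeper in your ``plateau'' to contribute, summing $\frac{n/K}{(\lambda+1)j}$ over $j=1,\dots,K$ gives only $\frac{n}{K(\lambda+1)}H_K = O(\log(m/\lambda))$. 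So $\SW(x_{k^*}\,|\,\myv)=O(\log(m/\lambda))$, while whatever the mechanism outputs already has revealed welfare $\Theta(1)$ from the cyclic top-$\lambda$ design, and the certified lower bound would be only $\Omega(\log(m/\lambda))$, far short of $\Omega(m/\lambda)$.

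The paper's construction fixes this by keeping $K$ constant. With $n=m$, $\lambda\le m/2$, and a cyclic top-$\lambda$ design with uniform value $1/(\lambda+1)$, it chooses two alternatives $x=a_1$ and $y=a_{\lambda+1}$ that never coexist in any agent's top $\lambda$, each appearing at position $\lambda+1$ for $m/2$ agents. All revealed welfares are then equal, so WLOG the mechanism outputs, say, $y$; the adversary has the $m/2$ agents in $T_{\lambda+1}(x)$ assign value $1/(\lambda+1)$ to $x$ and $0$ elsewhere, while the $m/2$ agents in $T_{\lambda+1}(y)$ spread their residual $1/(\lambda+1)$ uniformly over positions $\lambda+1,\dots,m$. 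This yields $\SW(x\,|\,\myv)\ge \frac{m}{2(\lambda+1)}$ and $\SW(y\,|\,\myv)\le 1$, i.e.\ distortion $\Omega(m/\lambda)$. The point your proposal misses is that the $\Omega(m/\lambda)$ factor must come from the per-instance welfare gap between the boosted alternative and the mechanism's output; the pigeonhole over the candidates only needs $K\ge 2$, and making $K$ large actively dilutes that gap.
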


\begin{proof}
Consider an instance with $n$ agents and $m=n$ alternatives $a_1, \ldots, a_m$. Let $\lambda \leq m/2$. We define the following ordinal profile:
\begin{itemize}
\item The $\lambda$ favorite alternatives of agent $i$ are $a_i, a_{i+1}, \ldots, a_{i+\lambda-1}$
in decreasing order, 
where all the indices are considered modulo $m$. 
Hence, all alternatives appear exactly once at each of the first $\lambda$ positions.

\item Alternatives $x=a_1$ and $y=a_{\lambda+1}$ appear $m/2$ times each at position $(\lambda+1)$, in the $m-\lambda \geq m/2$ agent rankings in which they do not appear at the first $\lambda$ positions. Observe that, by definition, $x$ and $y$ do not appear together at the first $\lambda$ positions in any  preference ranking, and there are multiple ways to decide in which rankings each of them appears at position $(\lambda+1)$; any such construction works for our purposes. 

\item For every agent, the remaining alternatives are arbitrarily ordered at positions $(\lambda+2)$ up to $m$.
\end{itemize}
See \cref{tab:prefix-lower-example} for a specific example of the ordinal profile. 

\begin{table}[th]
\centering
\begin{tabular}{c c c c c c c}
\noalign{\hrule height 0.5pt}\hline
agent &  \multicolumn{6}{c}{ranking} \\ 
\noalign{\hrule height 0.5pt}\hline
1 & $a_1$ & $a_2$ & $a_3$ & $a_4$ & $a_5$ & $a_6$ \\
2 & $a_2$ & $a_3$ & $a_1$ & $a_4$ & $a_5$ & $a_6$ \\
3 & $a_3$ & $a_4$ & $a_1$ & $a_2$ & $a_5$ & $a_6$ \\
4 & $a_4$ & $a_5$ & $a_3$ & $a_1$ & $a_2$ & $a_6$ \\ 
5 & $a_5$ & $a_6$ & $a_1$ & $a_3$ & $a_2$ & $a_4$ \\
6 & $a_6$ & $a_1$ & $a_3$ & $a_2$ & $a_4$ & $a_5$ \\
\noalign{\hrule height 0.5pt}\hline
\end{tabular} 
\caption{An example of the ordinal profile used in the proof of \cref{thm:prefix-lower} with $m=n=6$ and $\lambda=2$, where $x=a_1$, $y=a_3$, $T_{\lambda+1}(x) = \{2,3,5\}$ and $T_{\lambda+1}(y)=\{1,4,6\}$.}
\label{tab:prefix-lower-example}
\end{table}

The valuation profile $\myv$ is such that each agent has value $\frac{1}{\lambda+1}$ for her first $\lambda$ favorite alternatives.  It is without loss of generality to assume that any mechanism that knows the ordinal information of this instance and also makes $\lambda$ prefix value queries, must elect either $x$ or $y$. To see this, first notice that given the revealed cardinal information, the revealed welfare of all alternatives is the same. Further, given the particular preference profile, it is easy to always complete the valuation profile  $\myv$ in a way that guarantees that no alternative has more concealed welfare than $x$ and $y$; indeed, the two possible valuation profiles we consider have this property.  

So, assume that the mechanism selects alternative $y$ (the case of $x$ being completely symmetric). Now the remaining values of the agents are such that the $m/2$ agents in $T_{\lambda+1}(x)$ have value $\frac{1}{\lambda+1}$ for $x$ and $0$ for the remaining alternatives, while the $m/2$ agents in $T_{\lambda+1}(y)$ have value $\frac{1}{(m-\lambda)(\lambda+1)}$ for all alternatives at positions $\lambda+1$ up to $m$.

Given this valuation profile $\myv$, the social welfare of the winner $y$ is
\begin{align*}
\SW(y\,|\,\myv) = \frac{\lambda}{\lambda+1} + \frac{ \frac{m}{2}}{(m-\lambda)(\lambda+1)} \leq 1 \,.
\end{align*}  
In contrast, the social welfare of the optimal alternative $x$ is
\begin{align*}
\SW(x\,|\,\myv) = \frac{\lambda + \frac{m}{2}}{\lambda+1} + \frac{ \frac{m}{2} }{(m-\lambda)(\lambda+1)} \geq \frac{m}{2(\lambda+1)} \,.
\end{align*}
Therefore, the distortion of any mechanism is at least $\frac{m}{2(\lambda+1)}$.
\end{proof}

We now turn our attention to a slightly more general class of mechanisms which query all agents at the same fixed positions, and show that $\lambda$-PRV remains best possible among the mechanisms of this class for unrestricted valuation functions. In \cref{sec:lower-bounds} we further show that $1$-PRV is best possible among all mechanisms that make one query per agent for unrestricted valuation functions.

\begin{theorem}\label{thm:fixed-lower}
For unrestricted valuation functions, any mechanism that makes $\lambda$ fixed-position value queries per agent has distortion $\Omega(m/\lambda)$. 
\end{theorem}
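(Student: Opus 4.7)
The plan is to adapt the construction of \cref{thm:prefix-lower} via a pigeonhole argument. Since the mechanism makes only $\lambda$ fixed queries, among positions $\{1, 2, \ldots, \lambda+1\}$ at least one is not in $P$; let $q \leq \lambda+1$ be the smallest such non-queried position, so that $\{1, 2, \ldots, q-1\} \subseteq P$. The idea is that position $q$ will play the role of position $\lambda+1$ in the prefix construction, while the $q-1$ queried positions below it play the role of the first $\lambda$ positions there.

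First, I would construct an ordinal profile analogous to the one in \cref{thm:prefix-lower} but with $\lambda$ replaced by $\lambda' = q-1$: $n = m$ agents whose top $q-1$ positions follow the cyclic arrangement from that proof, together with two designated alternatives $x = a_1$ and $y = a_q$ that each appear $m/2$ times at position $q$ and are the only alternatives at position $q$ in any agent's ranking. For the (unrestricted) values, I would set $u_i(k) = 1/q$ for $k = 1, \ldots, q-1$ and $u_i(k) = 0$ for $k = q+1, \ldots, m$, uniformly across agents, leaving $u_i(q)$ unspecified for the moment. Because $q \notin P$ and because the revealed values depend only on the queried position, every query answer is identical across agents, so the revealed welfare of every alternative is the same $(q-1)/q$; the mechanism's output is effectively determined by the ordinal profile alone.

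Next, after the mechanism commits to an output $a_{j^*}$, the adversary finalizes the values at position $q$ in one of two symmetric ways: if $j^* \neq x$, set $u_i(q) = 1/q$ for $i \in T_q(x)$ and $u_i(q) = 0$ for $i \in T_q(y)$, making $x$ optimal; otherwise, swap the roles to make $y$ optimal. Both completions respect monotonicity because $u_i(q) \in [0, 1/q] = [u_i(q+1), u_i(q-1)]$ in either case, and both are consistent with the revealed query answers. A direct calculation, mirroring the one in \cref{thm:prefix-lower}, then shows that the optimal alternative has social welfare at least $\frac{m/2 + q - 1}{q} \geq \frac{m}{2q}$, whereas the mechanism's output has social welfare at most $\frac{q-1}{q}$. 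This yields distortion at least $\frac{m}{2(q-1)} \geq \frac{m}{2\lambda}$, establishing the claimed $\Omega(m/\lambda)$ bound.

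The main obstacle is ensuring that both adversarial completions are consistent with the same set of revealed answers; this is handled by setting values to $0$ at all positions greater than $q$ (so that any query there returns $0$ regardless of the agent's type) and by relying on the fact that position $q$ itself is never queried. A secondary technical point is realizing the ordinal profile with $|T_q(x)| = |T_q(y)| = m/2$ when the cyclic top forces $q - 1$ agents into $T_q(y)$ (those ranking $x$ in their top $q-1$) and symmetrically $q - 1$ agents into $T_q(x)$; this carries over from the prefix proof, provided $m$ is taken even and sufficiently large relative to $\lambda$.
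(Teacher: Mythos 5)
Your proof is correct and takes essentially the same approach as the paper: identify the first non-queried position, build a cyclic ordinal profile with $x$ and $y$ each appearing $m/2$ times at that position, set values so that every query answer is position-determined (hence reveals nothing that distinguishes alternatives), and let the adversary choose the unqueried position's values after the mechanism commits. The only difference is cosmetic: where the paper splits into two cases — invoking \cref{thm:prefix-lower} when the first gap in $P$ is at position $\lambda+1$, and otherwise running the direct construction with that gap $\ell\in[\lambda]$ in place of $\lambda+1$ — you handle both uniformly by letting $q$ range over $\{1,\dots,\lambda+1\}$, which is a slight streamlining of the same argument.
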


\begin{proof}
Let $\lambda \leq m/2$. Consider any mechanism of this class, and let $\ell$ be the first position at which it does not query the agents. Observe that if $\ell > \lambda$, then the mechanism only makes prefix value queries. In this case, the bound follows by \cref{thm:prefix-lower}, which holds for unit-sum valuation functions, and thus for unrestricted ones as well. So, we may assume that $\ell\in[\lambda]$.

Now, we consider an instance with $n=m$ that is very similar to the one presented in the proof of \cref{thm:prefix-lower}. Essentially, we substitute $(\lambda+1)$ with $\ell$, and we have that all alternatives appear exactly once at each of the first $\ell-1$ positions, while two alternatives $x$ and $y$ appear $m/2$ times each at position $\ell$. The remaining alternatives for every agent are arbitrarily ordered at position $\ell+1$ up to $m$.

The valuation profile $\myv$ is such that each agent has value $1$ for her first $\ell-1$ favorite alternatives, and value $0$ for the alternatives at positions $\ell+1$ up to $m$. Observe that the revealed welfare of all alternatives is exactly equal to $\ell-1$.
Given the revealed cardinal information and the particular ordinal profile, we can argue exactly like we did in the proof of \cref{thm:prefix-lower} about fact that it is without loss of generality to assume that the mechanism elects either $x$ or $y$. 
So, assume that the mechanism selects alternative $y$; the case of $x$ is symmetric. The remaining values of the agents are such that the $m/2$ agents in $T_{\ell}(x)$ have value $1$ for $x$, while the $m/2$ agents in $T_\ell(y)$ have value $0$ for $y$.

Given this valuation profile $\myv$, the social welfare of the winner $y$ is
$\SW(y\,|\,\myv) = \ell-1 \leq \lambda-1$, while the social welfare of the optimal alternative $x$ is
$\SW(x\,|\,\myv) = \ell-1 + \frac{m}{2} \geq \frac{m}{2}$. 
Therefore, the distortion of the mechanism is $\Omega(m/\lambda)$.
\end{proof}

\section{Improving Distortion via Simulated Valuation Functions}\label{sec:constant}
Our goal in this section is to further explore the additional power that cardinal queries provide, and focus on the design of mechanisms with improved distortion guarantees.
Mechanism $\lambda$-PRV is a good first step in this direction, but it needs to make a  large number of queries per agent in order to do so; in particular, by \cref{cor:prefix}, it achieves distortion $O(\sqrt{m}\verythinspace)$ for $\lambda=\Theta(\sqrt{m}\verythinspace)$ and constant distortion for $\lambda=\Theta(m)$. 
Therefore, it is natural to ask whether it is possible to design mechanisms that achieve similar distortion bounds, but require much less queries per agent. 
We  answer this question positively.

For any $k \in [m]$, we define a mechanism which we call {\em $k$-Acceptable Range Voting} (\nameref{alg:ARV}).
Let $\lambda_0, \lambda_1, \ldots, \lambda_k$ be $k+1$ thresholds such that $\lambda_{\ell} = m^{\frac{\ell}{k+1}}$ for $\ell \in \{0, 1, \ldots, k\}$. For every agent $i \in N$, we first query her value $v_i^*$ for her favorite alternative $j_i(1)$. Then, using binary search we compute the maximal {\em $\lambda_{\ell}$-acceptable set} $S_{i,\ell} = \{j \in A: v_{ij} \geq v_i^*/\lambda_\ell \}$ for every $\ell \in [k]$; see the procedure {BSearch} in the pseudocode describing the mechanism. Also, for each agent $i$, we define $S_{i,0}=\{j_i(1)\}$ to contain only $i$'s favorite alternative. The $\lambda_{\ell}$-acceptable set of an agent consists of the alternatives that this agent finds at most $\lambda_\ell$ times worse than her favorite alternative. We continue by constructing a new approximate valuation profile $\tildevv$, where the values of every agent $i$ are
\begin{itemize}
\item $\tildev_i^* = v_i^*$;
\item $\tildev_{ij} = v_i^*/ \lambda_\ell$ for every  $j \in S_{i,\ell} \setminus S_{i,\ell-1}$ with $\ell \in [k]$;
\item $\tildev_{ij} = 0$ for every $j \in A \setminus S_{i,k}$.
\end{itemize}
We finally elect the alternative $z \in A$ that maximizes the social welfare according to the approximate valuation profile: $z \in \arg\max_{j \in A} \sum_{i \in N} \tildev_{ij}$.

\begin{algorithm}[hp]
	\DontPrintSemicolon 
	\NoCaptionOfAlgo
	\algotitle{$k$-ARV}{alg:ARV.title}

        \SetKwFunction{proc}{\normalfont{\textsf{BSearch}}}
	\vspace{0.2cm}

	\For{$i \in N$ }{
		 $v_i^* = \V(i, j_i(1))$, where $j_i(p)$ is the alternative that agent $i$ ranks at position $p$. \label{line:vi*}\;
		 $\tilde{v}_{i,j_i(1)}=v_i^*$ \;
         $S_{i,0} = \{ j_i(1) \}$ \;
   		\For{$\ell \in \{1, 2, \dots, k\}$ }{
        	 $\lambda_{\ell} = m^\frac{\ell}{k+1}$ \;
        	 $p^* = $\ \ \proc{$1,m,\lambda_{\ell},v_i^*$}\\
			 $S_{i,\ell} = \{j \in A:  j \succ_i j_i(p^*) \}$  \tcc*{define the $\lambda_\ell$-acceptable set of agent $i$} 
            \For {$j \in S_{i,\ell} \setminus S_{i,\ell-1}$}{  
            		  $\tilde{v}_{ij} = v_i^*/\lambda_\ell$ \tcc*{define the approximate valuation profile}
            }
    	}	
    	\For {$j \in A \setminus S_{i,k}$}{
             $\tilde{v}_{ij} = 0$ \;
        }
    } 
    \For{$j \in A$}{ 
    	 $\SW_s(j \,|\, \tilde\myv) = 0$  \tcc*{compute the simulated welfare of alternative $j$} 
    	\For{$j \in A$}{
			 $\SW_s(j \,|\, \tilde\myv) = \SW_s(j \,|\, \tilde\myv) + \tilde{v}_{ij}$ 
		}
	}	
	Let $z \in \argmax_{j\in A} \SW_s(j \,|\, \tilde\myv)$ be an alternative achieving the best simulated welfare.\;
	\Return $z$

	\vspace{10pt}
	
	\SetKwProg{myproc}{Procedure}{}{}
	\myproc{\proc{$\alpha$, $\beta$, $\lambda$, $v$}}{
		\If{$\alpha=\beta$}
		{
			\Return $\alpha$
		}
		Let $u = \V(i,j_i(\frac{\alpha+\beta}{2}))$\\
		\If{$u \geq v/\lambda$} 
		{\vspace{0.2cm}
			\proc{$\frac{\alpha+\beta}{2},\beta,\lambda,v$}
		}
		\Else{
			\proc{$\alpha,\frac{\alpha+\beta}{2},\lambda,v$}
		}

	}
	
	\caption{Mechanism $k$-ARV$(\sucv)$} \label{alg:ARV} 
\end{algorithm}

Now, we proceed by proving an upper bound on the distortion achieved by $k$-ARV as a function of $k$. 

\begin{theorem}\label{thm:ARV-upper}
The mechanism $k$-\rm{ARV} makes $O(k\log m)$ value queries per agent, and has distortion $\mathcal{D}(k\text{-ARV})=O(\!\sqrt[\uproot{2}k+1]{m}\verythinspace)$.
\end{theorem}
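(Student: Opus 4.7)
\medskip
\noindent\textbf{Proof plan.} For the query count, each agent incurs one value query at line~\ref{line:vi*}, plus $k$ invocations of \textsf{BSearch}, each a textbook binary search on $[1,m]$ that halves the range per recursive call and issues exactly one query per call. So each search uses at most $\lceil \log_2 m \rceil$ queries and the total per agent is $1 + k\lceil \log_2 m \rceil = O(k\log m)$. Note that since $\succ_i$ is consistent with $v_i$, each acceptable set $S_{i,\ell}$ is a prefix of $i$'s ranking, so \textsf{BSearch} correctly returns the last position $p^*$ with $v_{i,j_i(p^*)} \geq v_i^*/\lambda_\ell$.

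For the distortion, let $x \in \argmax_{j\in A} \SW(j\,|\,\myv)$ and let $z$ be the mechanism's output. I would base the analysis on two pointwise comparisons between $v_{ij}$ and $\tilde{v}_{ij}$. \emph{(a)} $\tilde{v}_{ij} \leq v_{ij}$ for every $i,j$: for $j \in S_{i,\ell}\setminus S_{i,\ell-1}$ with $\ell \geq 1$, acceptability gives $v_{ij} \geq v_i^*/\lambda_\ell = \tilde{v}_{ij}$, while for $j \notin S_{i,k}$ we have $\tilde{v}_{ij}=0$. \emph{(b)} For every $j \in S_{i,k}$, $v_{ij} \leq m^{1/(k+1)}\,\tilde{v}_{ij}$: setting $\lambda_0 = 1$, if $j \in S_{i,\ell}\setminus S_{i,\ell-1}$ then by the maximality of $S_{i,\ell-1}$ we have $v_{ij} \leq v_i^*/\lambda_{\ell-1}$ while $\tilde{v}_{ij} = v_i^*/\lambda_\ell$, and $\lambda_\ell/\lambda_{\ell-1} = m^{1/(k+1)}$.

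Given these, I would split the optimum as
\[ \SW(x\,|\,\myv) \;=\; \sum_{i:\, x \in S_{i,k}} v_{ix} \;+\; \sum_{i:\, x \notin S_{i,k}} v_{ix}. \]
By \emph{(b)}, the first sum is at most $m^{1/(k+1)} \SW_s(x\,|\,\tilde\myv) \leq m^{1/(k+1)} \SW_s(z\,|\,\tilde\myv)$, where the last inequality is the optimality of $z$ for the simulated welfare. The main obstacle is the second sum, which is entirely invisible to $\tilde\myv$; my plan is to upper-bound each summand by $v_i^*/\lambda_k$ and then control $V^* := \sum_i v_i^*$ by exploiting the identity $v_i^* = \tilde{v}_{i,j_i(1)}$, namely
\[ V^* \;=\; \sum_i \tilde{v}_{i,j_i(1)} \;\leq\; \sum_i\sum_{j\in A} \tilde{v}_{ij} \;=\; \sum_{j\in A} \SW_s(j\,|\,\tilde\myv) \;\leq\; m\cdot\SW_s(z\,|\,\tilde\myv). \]
Thus the second sum is at most $m\cdot\SW_s(z\,|\,\tilde\myv)/\lambda_k = m^{1/(k+1)} \SW_s(z\,|\,\tilde\myv)$. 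Adding the two bounds and invoking $\SW_s(z\,|\,\tilde\myv) \leq \SW(z\,|\,\myv)$ (which follows from \emph{(a)}) yields $\SW(x\,|\,\myv) \leq 2\, m^{1/(k+1)}\, \SW(z\,|\,\myv)$, giving the claimed distortion $O(\sqrt[k+1]{m}\,)$.
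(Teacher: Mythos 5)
Your proof is correct and follows the same overall strategy as the paper's: split $\SW(x\,|\,\myv)$ according to whether each agent places $x$ inside her $\lambda_k$-acceptable set, charge the in-set part to $\SW_s(z\,|\,\tildevv)$ at a markup of $\lambda_1=m^{1/(k+1)}$, and charge the out-of-set part via $v_{ix}<v_i^*/\lambda_k$ to the sum of top values $\sum_i v_i^*$, which is in turn dominated by $m\cdot\SW_s(z\,|\,\tildevv)$. The difference is one of routing rather than substance. The paper introduces an auxiliary truncated profile $\hatvv$, its welfare-maximizer $y$, and a small lemma about welfare-maximizers in order to pass from $\sum_{i\in N_x(\hatvv)}v_{ix}$ to $\lambda_1\,\SW_s(z\,|\,\tildevv)$, and it handles $\sum_i v_i^*$ by decomposing over the sets $T_1(j)$ of agents whose favorite is $j$, getting a factor $(m-1)/\lambda_k$; you instead compare $\sum_{i:\,x\in S_{i,k}}v_{ix}$ directly to $\SW_s(x\,|\,\tildevv)\leq\SW_s(z\,|\,\tildevv)$ using your pointwise fact $(b)$, and bound $\sum_i v_i^*$ in aggregate by $\sum_{j}\SW_s(j\,|\,\tildevv)\leq m\,\SW_s(z\,|\,\tildevv)$, getting $m/\lambda_k$. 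Both routes land at the same $2m^{1/(k+1)}$ bound after applying your fact $(a)$ ($\tilde v_{ij}\le v_{ij}$) to return to the true welfare of $z$; yours simply avoids the intermediate $\hatvv$, $y$, and the side lemma, so it is a tidier presentation of the same argument.
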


\begin{proof}
Consider any instance with valuation profile $\myv$.
Since mechanism $k$-ARV executes a binary search in order to compute the $\lambda_{\ell}$-acceptable sets for each $\ell \in [k]$, it requires a total of $O(k \log{m})$ value queries per agent. The rest of the proof is dedicated in bounding the distortion of $k$-ARV.
First, we fix our notation: 
\begin{itemize}
\item $z$ is the alternative elected by $k$-ARV.
\item $y$ is a welfare-maximizing alternative for the valuation profile $\hatvv$, which is such that the value of agent $i \in N$ for alternative $j \in A$ is
\begin{align*}
\hatv_{ij} =
\begin{cases}
0, & \text{ if } j \in A \setminus S_{i,k} \\
v_{ij}, & \text{otherwise}.
\end{cases}
\end{align*}
That is, $y \in \arg\max_{j \in A} \sum_{i \in N} \hatv_{ij}$.

\item $x$ is the welfare-maximizing alternative for the true  profile $\myv$. That is, $x \in \arg\max_{j \in A} \sum_{i \in N} v_{ij}$.
\end{itemize}
Also, for a valuation profile $\mathbf{u}$, let $N_j(\mathbf{u}) = \{i\in N: u_{ij}>0\}$ be the set of agents with strictly positive value for alternative $j\in A$ with respect to  $\mathbf{u}$. We use the following easy fact about welfare-maximizing alternatives. 

\begin{lemma}\label{lem:N_j}
If $j^* \in \arg\max_{j \in A} \sum_{i \in N} v_{ij}$, then $j^* \in \arg\max_{j \in A} \sum_{i \in N_j(\mathbf{v})} v_{ij}$.
\end{lemma}	
	
To prove the statement, we will bound the social welfare of $x$ in terms of the social welfare of $z$ for the true valuation profile $\myv$. In particular, we will show that
	\begin{align}\label{eq:sw-x-z}
	\SW(x\,|\,\mathbf{v}) \leq \left( \lambda_1 + \frac{m}{\lambda_{k}} \right) \SW(z\,|\,\mathbf{v}) \,.
	\end{align}
Then, the approximation ratio of $k$-ARV will be 
	\begin{align*}
	\frac{\SW(x\,|\,\mathbf{v})}{\SW(z\,|\,\mathbf{v})} \leq \lambda_1 + \frac{m}{\lambda_{k}} = 2\cdot m^\frac{1}{k+1} = O(\!\sqrt[\uproot{2}k+1]{m}\verythinspace)\,.
	\end{align*}
	
We partition the social welfare of $x$ into the following two quantities: the contribution of the agents $i$ that place $x$ in the $\lambda_k$-acceptable set $S_{i,k}$, and the contribution of the remaining agents that have small value for $x$. By definition, we have that $i \in N_x(\hatvv)$ for any agent $i$ such that $x \in S_{i,k}$, and therefore
\begin{align*}
\SW(x\,|\,\myv) = \sum_{i \in N_x(\hatvv)} v_{ix} + \sum_{i \notin N_x(\hatvv)} v_{ix}
\end{align*}
We first consider the term $\sum_{i \in N_x(\hatvv)} v_{ix}$, and have that
	\begin{align}\label{eq:x-maximal}
	\sum_{i \in N_x(\hatvv)} v_{ix} 
	\leq \sum_{i \in N_y(\hatvv)} v_{iy} 
	\leq \lambda_1  \sum_{i \in N_y(\hatvv)} \tildev_{iy} 
	\leq \lambda_1  \sum_{i \in N_z(\tildevv)} \tildev_{iz} 
	\leq \lambda_1  \sum_{i \in N_z(\tildevv)} v_{iz} 
	\leq \lambda_1 \, \SW(z\,|\,\myv) \,,
	\end{align}
where 
\begin{itemize}
\item the first inequality follows by the fact that $\SW(j \,|\, \hatvv) = \sum_{i \in N_j(\hatvv)}  \hatv_{ij} = \sum_{i \in N_j(\hatvv)} v_{ij}$, the definition of $y$ as the alternative that maximizes the social welfare for the valuation profile $\hatvv$, and \cref{lem:N_j};

\item for the second inequality it suffices to notice that for any $i \in N_y(\hatvv)$ there exists an $\ell \in [k]$ such that $y \in S_{i,\ell} \setminus S_{i,\ell-1}$, and thus $v_{ij} \le \frac{v_i^*}{\lambda_{\ell-1}} = \lambda_1  \frac{v_i^*}{\lambda_{\ell}} = \lambda_1  \tildev_{ij}$;

\item the third inequality follows by the definition of $z$ as the alternative that maximizes the social welfare for the valuation profile $\tildevv$ 
and \cref{lem:N_j};

\item the fourth inequality follows by the fact that $v_{ij} \geq \tildev_{ij}$, for every $i \in N$ and $j \in A$;

\item the last inequality follows trivially from the fact that $N_z(\tildevv)\subseteq N$.
\end{itemize}

\noindent 
Next, we consider the term $\sum_{i \notin N_x(\hatvv)} v_{ix}$. By the definition of $N_x(\hatvv)$, for every $i \not\in N_x(\hatvv)$ it holds that $x \not\in S_{i,k}$, and hence $v_{ix} < v_i^*/\lambda_k$. Using this, we obtain
\begin{align}\label{eq:x-top-bound}
	\sum_{i \notin N_x(\hatvv)} v_{ix} 
	< \sum_{i \notin N_x(\hatvv)} \frac{v_i^*}{\lambda_k} 
	= \frac{1}{\lambda_k}\sum_{i \notin N_x(\hatvv)} v_i^* 
	\le \frac{1}{\lambda_k}\sum_{i \in N\setminus T_1(x)} v_i^* 
	= \frac{1}{\lambda_k} \sum_{j \in A\setminus\{x\}} \sum_{i \in T_1(j)}v_{ij} \,,
\end{align}
where recall that $T_1(j)$ is the set of agents whose favorite alternative is $j$, and for whom $v_{i}^*= \tildev_{i}^* =\tildev_{ij} = v_{ij}$. 
Since $z$ is the alternative that maximizes the quantity $\sum_{i \in N} \tildev_{ij}$, for every $j$ we have that
\begin{align*}
	\sum_{i \in N} \tildev_{iz} 
	\geq \sum_{i \in N} \tildev_{ij} 
	= \sum_{i \in T_1(j)} v_{ij} + \sum_{i \in N\setminus T_1(j)} \tildev_{ij} 
	\geq \sum_{i \in T_1(j)} v_{ij} \,.
\end{align*}
Combining the above inequality together with the fact that $v_{iz} \geq \tildev_{iz}$ for every agent $i \in N$, we have that
\begin{align*}
	\sum_{i \in N} v_{iz} \geq \sum_{i \in T_1(j)} v_{ij} \,.
\end{align*}
Using this last inequality, \eqref{eq:x-top-bound} becomes 
\begin{align}\label{eq:x-non-maximal}
	\sum_{i \notin N_x(\hatvv)} v_{ix} 
	\le \frac{1}{\lambda_k} \sum_{j \in A\setminus\{x\}} \sum_{i \in T_1(j)}v_{ij}
	\leq \frac{1}{\lambda_k} \sum_{j \in A\setminus\{x\}} \sum_{i \in N} v_{iz} 
	= \frac{m-1}{\lambda_k} \, \SW(z\,|\, \mathbf{v}).
\end{align}
Finally, the desired inequality \eqref{eq:sw-x-z} follows by combining inequalities \eqref{eq:x-maximal} and \eqref{eq:x-non-maximal}. 
\end{proof}

The next statement follows by appropriately setting the value of the parameter $k$ in \cref{thm:ARV-upper}, and shows how mechanism $k$-ARV improves upon the distortion guarantees of $\lambda$-PRV using way less value queries per agent.

\begin{corollary}\label{cor:ARV}
We have that
\begin{itemize}
\item $1$-\rm{ARV} achieves distortion $O(\sqrt{m}\verythinspace)$ using $O(\log{m})$ values queries per agent;
\item $\log{m}$-\rm{ARV} achieves distortion $O(1)$ using $O(\log^2{m})$ value queries per agent. 
\end{itemize}
\end{corollary}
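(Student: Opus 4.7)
The plan is to derive both bullets by directly instantiating \cref{thm:ARV-upper} at two specific values of the parameter $k$, namely $k=1$ and $k=\log m$, and verifying that the resulting expressions $O(k \log m)$ and $O(m^{1/(k+1)})$ simplify to the claimed quantities. No new ideas beyond \cref{thm:ARV-upper} are required; the whole content of the corollary is a computation.

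For the first bullet, I would substitute $k=1$ into \cref{thm:ARV-upper}. The query bound becomes $O(1 \cdot \log m) = O(\log m)$, and the distortion bound becomes $O(m^{1/(1+1)}) = O(m^{1/2}) = O(\sqrt{m}\verythinspace)$, so $1$-ARV enjoys exactly the claimed guarantees.

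For the second bullet, I would substitute $k = \log m$. The query bound is immediately $O(\log m \cdot \log m) = O(\log^2 m)$. The only step that requires a brief remark is the distortion bound: we get $O(m^{1/(\log m + 1)})$, and since $m^{1/\log m} = 2$ (taking logarithms in base $2$), we have $m^{1/(\log m + 1)} \le m^{1/\log m} = 2 = O(1)$. Hence $\log m$-ARV achieves constant distortion with $O(\log^2 m)$ value queries per agent.

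There is no real obstacle here; the only subtlety worth mentioning in the writeup is the identity $m^{1/\log m} = 2$, which turns the seemingly non-trivial expression $m^{1/(\log m + 1)}$ into a constant and is the reason $k = \log m$ is the natural threshold where \cref{thm:ARV-upper} yields $O(1)$ distortion. Everything else is a direct substitution.
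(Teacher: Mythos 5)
Your proposal is correct and matches the paper's (implicit) argument: the corollary is obtained by plugging $k=1$ and $k=\log m$ into \cref{thm:ARV-upper}, and the only non-trivial observation, that $m^{1/(\log m + 1)} \le m^{1/\log m} = 2 = O(1)$, is exactly the right one. Nothing further is needed.
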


\noindent We conclude this section by showing that the analysis of $k$-ARV is tight. 

\begin{theorem}\label{thm:ARV-lower}
The distortion of $k$-\rm{ARV} is $\Omega(\!\sqrt[\uproot{2}k+1]{m}\verythinspace)$.
\end{theorem}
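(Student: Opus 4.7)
The plan is to prove the lower bound by an explicit adversarial construction in which the mechanism is forced to elect an alternative $y$ whose true social welfare is a factor $\lambda_1 = m^{1/(k+1)}$ smaller than that of another alternative $x$. The leverage point is the simulation step: if $x$ lies in $S_{i,1}\setminus S_{i,0}$ for agent $i$, then its true value $v_{ix}$ can be as large as $v_i^*/\lambda_0 = v_i^*$ while the mechanism attributes to it only $v_i^*/\lambda_1$, creating a multiplicative slack of $\lambda_1 = m^{1/(k+1)}$ per agent.

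For the construction I would take $m$ alternatives $a_1,\ldots,a_m$, set $x=a_1$, $y=a_2$, and use $n=m-1$ agents. For $i\in\{1,\ldots,m-2\}$ (type A), agent $i$ ranks $a_{i+2}$ first, $x$ second, and the remaining $m-2$ alternatives (including $y$) in any fixed order respecting the tie-breaking rule; her values are $v_{i,a_{i+2}}=1$, $v_{ix}=1-\epsilon$, and, for every other alternative ranked at position $p\in\{3,\ldots,m\}$, a value in $\bigl[\tfrac{1}{\lambda_1},\tfrac{1}{\lambda_1}+m\delta\bigr)$ that is strictly decreasing in $p$ (e.g., $\tfrac{1}{\lambda_1}+(m-p)\delta$ for sufficiently small $\delta$). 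The single type-B agent $m-1$ ranks $y$ first with value $1$ and all other alternatives at value $0$.

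The execution analysis is then straightforward. For every type-A agent $i$, $v_i^*=1$ and all positions in her ranking carry value at least $1/\lambda_1$, so the binary search returns $p^*=m$ and $S_{i,1}=A$; since the $S_{i,\ell}$ are nested, $S_{i,\ell}=A$ for all $\ell\ge 1$. Thus the simulated profile assigns $\tilde v_{i,a_{i+2}}=1$ and $\tilde v_{ij}=1/\lambda_1$ for every $j\neq a_{i+2}$. The type-B agent has $S_{m-1,\ell}=\{y\}$, contributing $\tilde v_{m-1,y}=1$ and $0$ elsewhere. Summing gives
\begin{equation*}
\SW_s(y)=\tfrac{m-2}{\lambda_1}+1, \qquad \SW_s(x)=\tfrac{m-2}{\lambda_1}, \qquad \SW_s(a_j)=1+\tfrac{m-3}{\lambda_1} \text{ for } a_j\notin\{x,y\},
\end{equation*}
so $y$ is the \emph{unique} simulated-welfare maximiser and hence the alternative elected by $k$-ARV. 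In contrast, the true welfares are $\SW(x)=(m-2)(1-\epsilon)$ and $\SW(y)=1+(m-2)/\lambda_1+O(m\delta)$, yielding distortion at least $\frac{(m-2)(1-\epsilon)}{1+(m-2)/\lambda_1+O(m\delta)}$, which tends to $\lambda_1 = m^{1/(k+1)}$ as $\epsilon,\delta\to 0$ and $m\to\infty$.

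The main obstacle I anticipate is reconciling two conflicting demands on the type-A values at positions $3,\ldots,m$: they must all exceed $1/\lambda_1$ (so the binary search places them in $S_{i,1}$) but they must also be distinct and consistent with whatever ordinal tie-breaking convention the paper fixes. The small perturbation $v_{i,j_i(p)}=1/\lambda_1+(m-p)\delta$ handles this cleanly, since it produces a strictly decreasing sequence of values that lies in a single simulation band and contributes only a $(1+o(1))$ factor to the ratio. Apart from this detail, the calculation reduces to verifying that the extra ``$+1$'' contribution of the single type-B agent tips the simulated welfare in favour of $y$ over every $a_j\notin\{x,y\}$, which it does by exactly $1/\lambda_1$.
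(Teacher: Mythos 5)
Your proof is correct, and it exploits precisely the same slack as the paper's own argument: an alternative ranked second (or lower) but with true value close to $v_i^*$ gets placed into $S_{i,1}\setminus S_{i,0}$ and is assigned simulated value only $v_i^*/\lambda_1$, so the mechanism underestimates it by a factor of roughly $\lambda_1 = m^{1/(k+1)}$. The difference is in how the two constructions force the mechanism onto the losing alternative. The paper (with $n = m-2$ agents, values $v_{i,a_i}=v_{ix}=\tfrac{\lambda}{2\lambda+1}$, $v_{iz}=\tfrac{1}{2\lambda+1}$, zeros elsewhere) engineers an exact tie $\SW_s(x)=\SW_s(z)$ and concludes that $k$-ARV ``might'' elect $z$, i.e.\ it relies on adversarial tie-breaking. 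You instead add one extra ``tip'' agent whose single unit of simulated welfare makes $y$ the \emph{unique} simulated-welfare maximizer, so no appeal to tie-breaking is needed; the price is a slightly busier valuation profile (the $\delta$-perturbed band and the $\epsilon$ slack at position 2, both of which you handle correctly in the limit). Both arguments are sound and give the same asymptotic bound $\Omega(m^{1/(k+1)})$; yours is marginally more robust to how ties are resolved, while the paper's is more parsimonious.
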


\begin{proof}
Recall that $\lambda_1 = m^{\frac{1}{k+1}} = \lambda$ and consider the following instance with $m$ alternatives $A=\{a_1, ..., a_m\}$ and $n=m-2$ agents. 
To simplify our discussion, let $z=a_{m-1}$ and $x=a_m$. 
The valuation profile $\myv$ is such that the values of agent $i$ are
\begin{itemize}
\item $v_{i,a_i} = v_{ix} = \frac{\lambda}{2\lambda+1}$, 
\item $v_{iz} = \frac{1}{2\lambda+1}$, and
\item $v_{i,a_j} = 0$ for $j \in [m] \setminus \{i,m-1,m\}$. 
\end{itemize}
In the ordinal profile $\sucv$ which is given as input to the mechanism, we assume without loss of generality that agent $i$ ranks alternative $a_i$ ahead of $x$.
 
Since $\frac{1}{2\lambda+1} = \frac{1}{\lambda}\cdot \frac{\lambda}{2\lambda+1}$, 
$k$-ARV defines only one acceptable set per agent using $\lambda$.
In particular, the algorithm sets $S_{i,1} = \{x,z\}$ for every agent $i \in [m-2]$.
Then, the approximate valuation profile $\tildevv$ is such that the values of agent $i$ are
\begin{itemize}
\item $\tildev_{i,a_i} = \frac{\lambda}{2\lambda+1}$, 
\item $\tildev_{ix} = \tildev_{iz} =\frac{1}{2\lambda+1}$, and
\item $\tildev_{i,a_j} = 0$ for $j \in [m] \setminus \{i,m-1,m\}$
\end{itemize}
For the approximate valuation profile $\tildevv$, the social welfare of both alternatives $x$ and $z$ is 
$$\SW(x \,|\, \tildevv) = \SW(z \,|\, \tildevv) = \frac{m-2}{2\lambda+1},$$ 
while any other alternative $j \in A \setminus \{x,z\}$ has social welfare 
$$\SW(j \,|\, \tildevv) = \frac{\lambda}{2\lambda+1}.$$
Hence, $k$-ARV might select alternative $z$ as the winner instead of $x$, and the distortion is then 
\begin{align*}
\frac{(m-2)\frac{\lambda}{2\lambda+1}}{(m-2)\frac{1}{2\lambda+1}} = \lambda = \!\sqrt[\uproot{2}k+1]{m},
\end{align*}
as desired.
\end{proof}

\subsection{Implementing $k$-ARV with Comparison Queries}\label{sec:comparisons}
A crucial observation is that mechanism $k$-ARV can actually be implemented using just \emph{one} value query.  We can ask the value of each agent for her favorite alternative, and then ask $O(k\log{m})$ comparison queries that guide the binary search in computing the maximal acceptable sets.  Hence, $\log m$-ARV achieves constant distortion using only one value query and $O(\log^2{m})$ comparison queries. Therefore, it is natural to ask whether we can avoid this single value query entirely, and rely only on comparison queries instead. Surprisingly, for unit-sum valuation functions, we show that this is indeed possible at no extra cost! More precisely, we show that we can approximate the value that an agent has for her favorite alternative within a factor of $1\pm\varepsilon$, using $O(\log^2{m})$ comparison queries. 
Note that this is the only time that we assume the unit-sum normalization for any of our upper bounds.

For the sake of readability, we focus on a single agent and write $u_j$ for her value for the alternative that she ranks at position $j \in [m]$. We take the same approach as in the proof of \cref{thm:ARV-upper} in order to build an approximate valuation profile. Since everything in this profile is expressed in terms of the largest value $u_1$, we utilize the unit-sum assumption to approximately solve for $u_1$.

\begin{theorem}\label{thm:comparison}
For any constant $\varepsilon\in [1/m,1]$, it is possible to compute some $u^*$ such that $(1-\varepsilon)\, u^* \leq u_1 \leq (1+\varepsilon)\, u^*$, using $O\big(\frac{\log^2{m}}{\varepsilon}\big)$ comparison queries per agent.
\end{theorem}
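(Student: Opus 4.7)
The plan is to adapt the acceptable-set construction from $k$-ARV but replace value queries with comparison queries, and then exploit the unit-sum constraint $\sum_{j=1}^m u_j = 1$ to recover the missing scale. Since the comparison query $\C(i, j, 1, 1/\tau)$ directly tests whether $u_j \geq u_1/\tau$, we can determine cardinal ratios relative to $u_1$ without ever issuing a value query; unit-sum then converts these ratios into an approximation of $u_1$ itself.

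First, I fix a geometric ladder of thresholds $\tau_\ell = (1+\delta)^\ell$ for $\ell = 0, 1, \ldots, L$, with $\delta = \varepsilon/2$ and $L = \lceil \log_{1+\delta}(2m/\varepsilon)\rceil$. Since the positions are sorted in non-increasing order of value, for each $\tau_\ell$ I run a binary search over positions to find the largest $p_\ell$ with $u_{p_\ell} \geq u_1/\tau_\ell$, which takes $O(\log m)$ comparison queries per threshold. The total is $O(L \log m) = O(\log^2 m / \varepsilon)$, matching the claimed query budget (the assumption $\varepsilon \geq 1/m$ is needed to absorb $\log(1/\varepsilon)$ into $\log m$).

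Next, using $p_0 \leq p_1 \leq \cdots \leq p_L$, I define approximate values $\tilde{u}_j$: for $j \leq p_L$, let $\ell(j)$ be the smallest index with $j \leq p_{\ell(j)}$ and set $\tilde{u}_j = u_1/\tau_{\ell(j)}$; for $j > p_L$, set $\tilde{u}_j = 0$. By construction, $\tilde{u}_j \leq u_j \leq (1+\delta)\,\tilde{u}_j$ for every $j \leq p_L$. Writing $c_j := \tilde{u}_j/u_1 \in \{\tau_0^{-1}, \ldots, \tau_L^{-1}, 0\}$, these coefficients are fully determined by the query answers and independent of $u_1$, so $C := \sum_{j=1}^m c_j$ is a number known to the mechanism.

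Finally, I invoke the unit-sum constraint. The tail satisfies $\sum_{j > p_L} u_j \leq m \cdot u_1/\tau_L \leq \varepsilon u_1/2$ by the choice of $L$, giving
\begin{equation*}
u_1 C \;\leq\; \sum_{j=1}^m u_j \;=\; 1 \;\leq\; (1+\delta)\, u_1 C + \varepsilon u_1/2.
\end{equation*}
Setting $u^* := 1/C$, the left inequality yields $u_1 \leq u^*$, while the right one rearranges (using $c_1 = 1$, hence $C \geq 1$, so $\varepsilon/(2C) \leq \varepsilon/2$) to $u_1 \geq u^*/(1+\delta+\varepsilon/2) = u^*/(1+\varepsilon) \geq (1-\varepsilon)\, u^*$. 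The main obstacle, and the reason unit-sum is genuinely used here, lies in this scale-fixing step: comparison queries can only expose ratios of values, so a global normalization is necessary to lift a ratio-approximation to a cardinal approximation, and the tail beyond $\tau_L$ must be pushed below $\varepsilon u_1$ without inflating $L$, which is precisely what forces $\tau_L = \Theta(m/\varepsilon)$ and the extra $1/\varepsilon$ factor in the query count.
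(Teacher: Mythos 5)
Your proof is correct and follows essentially the same approach as the paper: a geometric ladder of thresholds $(1+\Theta(\varepsilon))^\ell$, a binary search per threshold to locate where the sorted values cross each level, and the unit-sum constraint to convert the resulting ratio-approximation into an estimate of $u_1$. The only cosmetic differences are in the parametrization (you push the last threshold to $2m/\varepsilon$ and bound the tail directly, whereas the paper pushes to $m^2$ and uses $\varepsilon \geq 1/m$ to bound the tail contribution); both yield the same bounds.
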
 

\begin{proof}
Let $\kappa = \lceil \log_{1+\varepsilon}{m^2} \rceil = \Theta\big(\frac{\log{m}}{\log{(1+\varepsilon)}}\big) = \Theta\big(\frac{\log{m}}{\varepsilon}\big)$. 
We define $\kappa$ thresholds $\lambda_{\ell} = (1+\varepsilon)^{\ell}$ for ${\ell} \in [\kappa]$;
observe that $\lambda_i = \lambda_{i-1} \cdot \lambda_1$.
For each ${\ell} \in [\kappa]$, we perform a binary search using $\Theta(\log{m})$ comparison queries to find the maximum integer $\xi_{\ell}$ such that $u_{\xi_{\ell}} \geq \frac{u_1}{\lambda_{\ell}}$; we also set $\xi_0=1$ and $\xi_{\kappa+1}=m$. 
Hence, we have that 
\begin{align}\label{eq:utility-bounds}
u_j \in 
\begin{cases}
\left[ \frac{u_1}{\lambda_{\ell}}, \frac{u_1}{\lambda_{{\ell}-1}} \right), & \text{for each } j \in (\xi_{{\ell}-1},\xi_{\ell}], {\ell} \in [\kappa] \\\\
\left[0, \frac{u_1}{\lambda_\kappa} \right), & \text{for each } j \in (\xi_\kappa,\xi_{\kappa+1}] \,. 
\end{cases}
\end{align}
To simplify the notation, let $g_i = \xi_i - \xi_{i-1}$ for $i \in [\kappa+1]$. Note that $g_i \leq m$.  

By the unit-sum normalization we have
\begin{align*}
1 = \sum_{j=1}^m u_j = \sum_{{\ell}=1}^{\kappa+1} \sum_{j \in (\xi_{{\ell}-1},\xi_{\ell}]} u_j \,.
\end{align*}
Using \eqref{eq:utility-bounds}, we can now upper- and lower-bound the above expression. We start with the upper bound:
\[
1 = \sum_{{\ell}=1}^{\kappa+1} \sum_{j \in (\xi_{{\ell}-1},\xi_{\ell}]} u_j
\leq \sum_{{\ell}=1}^{\kappa+1} \sum_{j \in (\xi_{{\ell}-1},\xi_{\ell}]} \frac{u_1}{\lambda_{{\ell}-1}} \\
= \sum_{{\ell}=1}^{\kappa+1} u_1 \frac{g_{\ell}}{\lambda_{{\ell}-1}} = u_1 \sum_{{\ell}=1}^{\kappa} \frac{g_{\ell}}{\lambda_{{\ell}-1}} + u_1 \frac{g_{\kappa+1}}{\lambda_\kappa} \,.
\]
By the definition of $\lambda_\kappa$ we have that $\frac{g_{\kappa+1}}{\lambda_\kappa} \leq \frac{1}{m} \leq \varepsilon$ and hence,
\begin{align*}
u_1 \geq (1 - \varepsilon) \cdot \left( \sum_{{\ell}=1}^{\kappa} \frac{g_{\ell}}{\lambda_{{\ell}-1}} \right)^{-1}.
\end{align*}
Similarly, by using the lower bounds in \eqref{eq:utility-bounds}, we have that
\[
1 = \sum_{{\ell}=1}^{\kappa+1} \sum_{j \in (\xi_{{\ell}-1},\xi_{\ell}]} u_j
\geq \sum_{{\ell}=1}^{\kappa} \sum_{j \in (\xi_{{\ell}-1},\xi_{\ell}]} \frac{u_1}{\lambda_{{\ell}}} \\
= \sum_{{\ell}=1}^{\kappa} u_1 \frac{g_{\ell}}{\lambda_{{\ell}}} = u_1 \sum_{{\ell}=1}^{\kappa} \frac{g_{\ell}}{\lambda_{{\ell}-1}} \cdot \frac{1}{\lambda_1}
=   \frac{u_1}{1+\varepsilon}\sum_{{\ell}=1}^{\kappa} \frac{g_{\ell}}{\lambda_{{\ell}-1}} \,.
\]
or, equivalently,
\begin{align*}
u_1 \leq (1 + \varepsilon) \cdot \left( \sum_{{\ell}=1}^{\kappa} \frac{g_{\ell}}{\lambda_{{\ell}-1}} \right)^{-1}.
\end{align*}
Hence, the theorem follows by setting $u^* = \left( \sum_{{\ell}=1}^{\kappa} \frac{g_{\ell}}{\lambda_{{\ell}-1}} \right)^{-1}$. Indeed, to compute $u^*$ only the integers $\xi_{\ell}$, ${\ell} \in [\kappa]$ are used. Those, in turn, are computed via $\Theta(\log{m})$ binary searches. Hence, we only need $O\big(\frac{\log^2{m}}{\varepsilon}\big)$ comparison queries.
\end{proof}

By inspecting the proof of \cref{thm:ARV-upper}, it is easy to see that knowing the approximate valuation profile $\tildevv$ exactly or perturbed within a multiplicative constant factor, makes no difference asymptotically. Therefore, we augment $k$-ARV with a pre-processing step where each maximum value $v_i^*$ is approximated according to \cref{thm:comparison} above, and these approximations are used in line \ref{line:vi*} of the mechanism.  
For $k = \log m$, this new mechanism, which we call \emph{modified} $(\log m)$-ARV, achieves the same distortion guarantee and asks the same number of queries (asymptotically) as $(\log m)$-ARV.

\begin{corollary}\label{cor:ARV-comparison}
Modified $(\log{m})$-\rm{ARV} achieves distortion $O(1)$ using $O(\log^2{m})$ comparison queries per agent. 
\end{corollary}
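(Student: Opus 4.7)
The plan is to combine \cref{thm:comparison} with the $(\log m)$-ARV mechanism from \cref{thm:ARV-upper}, replacing every value query by a comparison query at the cost of only constant-factor slack in the simulated profile. First, I observe that the binary searches inside $k$-ARV already use value queries only in the form ``is $v_{i,j_i(p)} \geq v_i^*/\lambda_\ell$?'', which is exactly the comparison query $\C(i, j_i(p), j_i(1), 1/\lambda_\ell)$. So all $\Theta(k \log m) = \Theta(\log^2 m)$ value queries in the search phase for $k = \log m$ are directly replaced by the same number of comparison queries, with no dependence on knowing $v_i^*$ numerically.

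The only remaining use of a true numeric value of $v_i^*$ is in the construction of the simulated profile $\tilde{v}_{ij}$, since we set $\tilde{v}_{ij} = v_i^*/\lambda_\ell$ when $j \in S_{i,\ell}\setminus S_{i,\ell-1}$. Here I would invoke \cref{thm:comparison} with some fixed constant $\varepsilon \in (0,1)$ (say $\varepsilon = 1/2$) once per agent, to obtain $\hat v_i^*$ satisfying $(1-\varepsilon)\hat v_i^* \leq v_i^* \leq (1+\varepsilon)\hat v_i^*$ using an additional $O(\log^2 m)$ comparison queries per agent. Substituting $\hat v_i^*$ for $v_i^*$ in line~\ref{line:vi*} of the mechanism gives a valid simulated profile $\hatvv$ with $\hat v_{ij}$ within a constant multiplicative factor of $\tilde v_{ij}$, uniformly over $i$ and $j$.

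The total query budget is thus $O(\log^2 m)$ comparison queries per agent, as claimed. For the distortion bound, the only step that remains is to observe that the analysis of \cref{thm:ARV-upper} is robust to perturbing $\tildevv$ by a uniform constant factor. Concretely, every inequality in the chain \eqref{eq:x-maximal} and every step leading to \eqref{eq:x-non-maximal} either compares $v_{ij}$ to $\tilde v_{ij}$ up to a factor of $\lambda_1 = m^{1/(k+1)}$, or uses the optimality of $z$ under the simulated profile. Replacing $\tilde v$ by $\hat v$ weakens each such inequality only by the constant factor $(1+\varepsilon)/(1-\varepsilon)$, so the final bound becomes $O(\lambda_1 + m/\lambda_k)$ times a constant. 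For $k = \log m$ this is still $O(1)$.

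The only real subtlety, and the step I would be most careful with, is verifying that the optimality argument ``$z$ maximizes $\sum_i \tilde v_{ij}$'' still transfers cleanly when the mechanism now maximizes $\sum_i \hat v_{ij}$ and different agents use different constant perturbations. This is fine because each agent's $\hat v_{i,\cdot}$ is a single global constant multiple of her $\tilde v_{i,\cdot}$, so while the per-agent scaling differs across agents, the corresponding $\sum_i \hat v_{ij}$ still sandwiches $\sum_i \tilde v_{ij}$ between $(1-\varepsilon)$ and $(1+\varepsilon)$ times itself, which is exactly what the distortion analysis of \cref{thm:ARV-upper} can absorb into the constant.
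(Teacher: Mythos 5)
Your proposal is correct and takes the same approach as the paper: implement the binary searches directly via comparison queries of the form $\C(i, j_i(p), j_i(1), 1/\lambda_\ell)$, estimate each $v_i^*$ to within a constant factor via \cref{thm:comparison}, and observe that the distortion analysis of \cref{thm:ARV-upper} absorbs the resulting per-agent constant-factor perturbation of $\tildevv$. The paper dispatches the robustness step as a one-line remark, so your more detailed verification of which inequalities are weakened and by how much is a useful elaboration rather than a deviation.
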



\section{Lower Bounds} \label{sec:lower-bounds}
We now present general lower bounds on the distortion which depend on the number of value queries the mechanisms are allowed to ask per agent, but are unconditional on how and where they decide to ask these queries. In particular, we show that the distortion of any mechanism that makes one value query per agent is $\Omega(m)$ when the agents have unrestricted valuation functions, and $\Omega(\sqrt{m}\verythinspace)$ when the agents have unit-sum valuation functions. Moreover, for mechanisms that are allowed to make $\lambda \geq 1$ queries per agent, we show a weaker lower bound of $\Omega\left(\frac{1}{\lambda+1} \cdot m^{\frac{1}{2(\lambda+1)}}\right)$ for unrestricted valuation functions. This shows that in order to achieve constant distortion, we need to necessarily make $\omega\left(\frac{\log{m}}{\log\log{m}}\right)$ queries per agent. Closing the gap between this lower bound and the upper bound of $O(\log^2{m})$ queries that the mechanism $O(\log{m})$-ARV from \cref{sec:constant} requires in order to achieve constant distortion is one of the most interesting problems that our work leaves open; see the discussion in \cref{sec:conclusions}.

Before we proceed with the presentation of the results of this section, let us give a very brief roadmap of the proofs. The high-level idea is similar to those used in the lower bound proofs presented in previous sections (for example, see \cref{thm:prefix-lower} and \cref{thm:fixed-lower} in \cref{sec:prefix}), but the particular constructions and arguments exploited in the following proofs are more delicate; this is a consequence of the fact that we aim to lower-bound the distortion of any mechanism. To this end, assuming an arbitrary mechanism (that is allowed to make a specific number of queries per agent), we first define a single ordinal preference profile which is given as input to the mechanism, and also carefully define the cardinal information that could be revealed by any query of the mechanism. This cardinal information is such that it is always possible to define the unknown part of the valuation profile in a way that leads to a large enough gap between the optimal social welfare and the social welfare of the alternative selected by the mechanism. Since we do not know how the mechanism makes its selection, we need to take into account every possible scenario, and therefore define many different valuation profiles that can be used in different cases.\\

\noindent \textbf{Remark.} To simplify our discussion when we deal with unrestricted valuation functions in this section, we assume that the values are normalized and lie in the interval $[0,1]$. This is without loss of generality since we make no other assumption about the way the mechanisms behave, other than that they are allowed to ask a particular number of queries.

\subsection{One-Query Mechanisms with Unrestricted Valuations}
We start by showing that, for unrestricted valuations, any mechanism that makes one value query per agent has linear distortion. This also shows that the mechanism $1$-PRV from \cref{sec:prefix} is the best possible mechanism among such mechanisms. 

\begin{theorem}\label{thm:lower-one-unrestricted}
For unrestricted valuation functions, the distortion of any mechanism that uses one value query per agent is $\Omega(m)$.
\end{theorem}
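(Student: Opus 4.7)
The plan is an adversary argument anchored on an explicit instance. I would take $m$ alternatives and the cyclic preference profile, where agent $i$ ranks $a_i\succ a_{i+1}\succ\cdots\succ a_{i-1}$ (indices mod $m$); I take $n$ a multiple of $m$, so that each of the $m$ cyclic rankings is used by $n/m$ agents. Consider the family of valuation profiles $\{V_y:y\in A\}$, where in $V_y$ we set $v_i(a)=1$ if $a\succeq_i y$ and $v_i(a)=0$ otherwise. Each $V_y$ is ordinally consistent with the cyclic profile, and a short computation gives $\SW(y\,|\,V_y)=n$ while $\SW(a\,|\,V_y)$ equals $(n/m)$ times the cyclic distance from $a$ to $y$ for $a\ne y$; in particular, the cyclic successor of $y$ has welfare only $n/m$, so if this is the mechanism's output the distortion on $V_y$ is already $m$.

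Given a deterministic one-query mechanism $\M$, the adversary maintains the set $S\subseteq A$ of targets $y$ whose profile $V_y$ is still consistent with the query answers given so far; initially $S=A$. Each query $\V(i,a)$ has a binary answer under any $V_y$, with answer~$1$ consistent with the cyclic arc $\{y\in A:\text{pos}_i(y)\ge\text{pos}_i(a)\}$ and answer~$0$ consistent with its complement. The adversary answers each query with the value whose consistent side intersects $S$ in the larger cyclic arc. Two structural features help: first, there are only $m$ distinct rankings and $m$ positions, so queries to different agents sharing the same ranking at the same position are deterministic repeats under every $V_y$, and thus only $O(m^2)$ of the $n$ queries are genuinely distinct; second, each distinct query's larger side is itself a cyclic arc, so $S$ remains a cyclic arc throughout, which can be made to shrink only very slowly. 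When $\M$ halts with output $z$, I pick $y^\star\in S\setminus\{z\}$ to be the cyclic predecessor of $z$; committing to the profile $V_{y^\star}$ gives $\SW(y^\star\,|\,V_{y^\star})=n$ and $\SW(z\,|\,V_{y^\star})=n/m$, so the distortion on that profile is $\Omega(m)$.

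The main obstacle is proving the survival bound that $|S|=\Omega(m)$ after all queries are resolved, since a naive halving argument would only give $|S|\ge m/2^n$, which is useless for large $n$. I would close this by exploiting the cyclic-arc geometry of $S$: each individual query can be answered so that the adversary retains either the longer of the two resulting arcs or, in many regimes, a subarc whose length decreases only by an additive $O(1)$; combined with the $O(m^2)$-distinct-queries bound above, this yields the needed $\Omega(m)$ residual size. Proving this stable-decay property, and arguing that the adversary's answers do not accidentally exclude the cyclic predecessor of the final $z$ from $S$, is the delicate technical step where I expect most of the work to lie; the rest of the argument is the routine calculation already summarised above.
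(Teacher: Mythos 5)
Your construction is genuinely different from the paper's, but the adversary argument has a fatal flaw rather than a ``delicate technical step'': the family $\{V_y\}$ you chose already admits a one-query mechanism with distortion at most $2$, so nothing built on it can yield an $\Omega(m)$ bound.

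To see this, consider the mechanism that asks agent~$1$ for her value of the alternative at position $m/2+1$ in her ranking (take $m$ even for simplicity). Under $V_y$, the answer is $1$ if and only if $y\in I_1=\{a_{m/2+1},\dots,a_m\}$, and $0$ if and only if $y\in I_0=\{a_1,\dots,a_{m/2}\}$; in either case the mechanism learns $y\in I$ for a cyclic arc $I$ of length $m/2$, and it outputs the lowest-index alternative $z$ of $I$. By your own welfare formula, for every $y\in I\setminus\{z\}$ one has $\SW(z\mid V_y)=(n/m)\bigl((z-y)\bmod m\bigr)=(n/m)\bigl(m-(y-z)\bigr)\ge(n/m)(m-|I|+1)>n/2$, while $\SW(y\mid V_y)=n$; and for $y=z$ the mechanism is optimal. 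So against $\{V_y\}$ the distortion is at most $2$ regardless of how you answer any further queries or what shape $S$ has at the end. The family is too informative: because each $V_y$ is $\{0,1\}$-valued, every value query degenerates into a one-bit membership test against a cyclic arc, and a single well-aimed query lets the mechanism ``stand just behind'' the surviving arc of candidates.

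For the same reason, the auxiliary claims you hoped would close the gap do not hold: $S$ need not remain a single cyclic arc (intersecting an arc $S$ with a query arc can split it into two pieces), and there is no additive-$O(1)$ decay guarantee — a binary search over cyclic arcs using $O(\log m)$ distinct agents (available since there are $m$ rankings) drives $|S|$ to $1$, at which point the mechanism simply outputs the pinned-down $y$. The paper's proof circumvents all of this by working with a multi-level value scale ($1$, $m^{-1}$, $m^{-2}$, $0$) designed so that no single query to any one agent localizes the hidden high-value coordinate; the proof then case-splits on which positions the mechanism queries and exhibits, in every case, two consistent completions of the profile whose optimal social welfares differ by a factor $\Omega(m)$. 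If you want a cyclic-flavoured construction to work, you would need profiles whose query answers are nearly uninformative across the candidate set — not a $0/1$ encoding that leaks an arc-membership bit with every query.
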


\begin{proof}
Let $\M$ be an arbitrary mechanism that makes one value query per agent, and consider an instance with $m \geq 4$ alternatives and $n=m-2$ agents, where $m$ is an even number. 
We denote the set of alternatives as $A=\{a_1, ..., a_{m-2},x,y\}$. 
Using the notation $[z, w]$ to denote the fact that alternatives $z$ and $w$ are ordered arbitrarily in the ranking of an agent, we define the ordinal profile as follows:
\begin{itemize}
\item The ranking of agent $i \leq \frac{n}{2}$ is $a_i \succ_i x \succ_i y \succ_i [ a_1, ..., a_{i-1}, a_{i+1}, ..., a_{m-2} ]$;

\item The ranking of agent $i > \frac{n}{2}$ is $a_i \succ_i y \succ_i x \succ_i [ a_1, ..., a_{i-1}, a_{i+1}, ..., a_{m-2} ]$.
\end{itemize} 
Depending on the positions at which $\M$ queries, we reveal the following cardinal information:
\begin{itemize}
\item For every query at a first position we reveal a value of $m^{-1}$;
\item For every query at a second or third position we reveal a value of $m^{-2}$;
\item For any other position we reveal a value of $0$. 
\end{itemize}

We claim that $\M$ must query all agents at the first position, as otherwise its distortion is $\Omega(m)$. Assume otherwise that $\M$ does not query agent $1$ her value for alternative $a_1$; this is without loss of generality due to symmetry. We now define two valuation profiles $\myv_1$ and $\myv_2$, which are both consistent to the ordinal profile and the revealed information, but differ on the value that agent $1$ has for alternative $a_1$. In particular:
\begin{itemize}
\item In both $\myv_1$ and $\myv_2$, every agent $i \geq 2$ has value $m^{-1}$ for alternative $a_i$, $m^{-2}$ for alternatives $x$ and $y$, and $0$ for everyone else;

\item In both $\myv_1$ and $\myv_2$, agent $1$ has value $m^{-2}$ for alternatives $x$ and $y$, and $0$ for every alternative $a_i$ for $i \geq 2$. The value of agent $1$ for alternative $a_1$ is $m^{-2}$ in $\myv_1$, and $1$ in $\myv_2$. 
\end{itemize}
These two profiles are utilized in the following way: If $\M$ selects $a_1$, then the valuation profile is set to be $\myv_1$, while if $\M$ selects some other alternative, then the valuation profile is set to be $\myv_2$.
Now, observe that
\begin{align*}
\SW(a_i \,|\, \myv_1) = \SW(a_i \,|\, \myv_2) = m^{-1} \text{ \ \ \ \ for every $i \geq 2$},
\end{align*} 
and 
\begin{align*}
\SW(x \,|\, \myv_1) = \SW(x \,|\, \myv_2) = \SW(y \,|\, \myv_1) = \SW(y \,|\, \myv_2) = (m-2)\cdot m^{-2} \leq m^{-1}.
\end{align*}
If $\M$ selects $a_1$, the social welfare of $a_1$ is $\SW(a_1 \,|\, \myv_1) = m^{-2}$ and therefore any alternative $a_i$ for $i \geq 2$ is optimal, yielding distortion equal to $m$. Similarly, when $\M$ selects some alternative different than $a_1$, then $a_1$ is optimal with social welfare $\SW(a_1 \,|\, \myv_2)=1$, yielding distortion at least $m$. 

Hence, $\M$ must query all agents at the first position in order to learn a value of $m^{-1}$ for every alternative $a_i$, $i \in [n]$. We now define three valuation profiles $\myv_3$, $\myv_4$ and $\myv_5$, which are consistent to the ordinal profile and this revealed information, but differ on the values that the agents have for alternatives $x$ and $y$; in particular, $\myv_4$ and $\myv_5$ are symmetric.
\begin{itemize}
\item In all three profiles, every agent $i \in [n]$ has value $m^{-1}$ for alternative $a_i$, and $0$ for any alternative $a_j$ such that $j \neq i$;

\item In $\myv_3$, all agents have value $m^{-1}$ for alternatives $x$ and $y$;

\item In $\myv_4$, all agents have value $m^{-2}$ for alternative $y$, every agent $i > n/2$ (who ranks $x$ after $y$) has value $m^{-2}$ for $x$, and every agent $i \leq n/2$ (who ranks $x$ before $y$) has value $m^{-1}$ for $x$.

\item In $\myv_5$, all agents have value $m^{-2}$ for alternative $x$, every agent $i \leq n/2$ (who ranks $y$ after $x$) has value $m^{-2}$ for $y$, and every agent $i > n/2$ (who ranks $y$ before $x$) has value $m^{-1}$ for $y$.
\end{itemize} 
If $\M$ selects some alternative $a_i$ for $i \in [n]$, then the valuation profile is set to be $\myv_3$, while if $\M$ selects alternative $y$ or $x$, then the valuation profile is set to be $\myv_4$ or $\myv_5$, respectively.
Given this, observe that if $\M$ decides to select alternative $a_i$ for some $i \in [n]$, then since 
\begin{align*}
\SW(a_i \,|\, \myv_3) = \SW(a_i \,|\, \myv_4) = \SW(a_i \,|\, \myv_5) = m^{-1} \text{ \ \ \ \ for every $i \in [n]$},
\end{align*} 
and 
\begin{align*}
\SW(x \,|\, \myv_3) = \SW(y \,|\, \myv_3) = (m-2)\cdot m^{-1} = 1- 2m^{-1},
\end{align*}
the distortion is at least $m-2$. 
Similarly, if $\M$ decides to select alternative $y$, then since 
\begin{align*}
\SW(y \,|\, \myv_4) = (m-2)m^{-2} \leq m^{-1}
\end{align*}
and 
\begin{align*}
\SW(x \,|\, \myv_4) = \left( \frac{m}{2}-1 \right) m^{-1} + \left( \frac{m}{2}-1 \right) m^{-2} = \frac{1}{2}\bigg( 1 - m^{-1}- 2m^{-2} \bigg),
\end{align*}
the distortion is at least $\frac{1}{2}(m - 1 - 2m^{-1}) \geq \frac{m}{4}$ for any $m \geq 4$; the case where $\M$ selects $x$ is symmetric and follows by $\myv_5$. In any case, $\M$ has distortion $\Omega(m)$ and the theorem follows.
\end{proof}

\subsection{General Mechanisms with Unrestricted Valuations}
\label{sec:lower-bounds_lambda}
We will now focus on mechanisms that make a number $\lambda \geq 1$ of queries per agent, and will show a weaker lower bound on the distortion which depends on $\lambda$.

\begin{theorem}\label{thm:lower-general-unrestricted}
For unrestricted valuation functions, the distortion of any mechanism that uses $\lambda \geq 1$ value queries per agent is $\Omega\left( \frac{1}{\lambda+1} \cdot m^{\frac{1}{2(\lambda+1)}}\right)$. 
\end{theorem}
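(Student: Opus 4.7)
The plan is to construct an adversarial instance, parameterized by $r = m^{1/(2(\lambda+1))}$, that forces any mechanism using at most $\lambda$ value queries per agent to output an alternative whose social welfare is $\Omega(r/(\lambda+1))$ times smaller than the optimum. The approach generalizes the two-level argument of \cref{thm:lower-one-unrestricted} to $\lambda+1$ geometric ``tiers,'' so that no matter which $\lambda$ positions the mechanism queries per agent, at least one entire tier remains unqueried for that agent, giving the adversary enough slack to hide a high-value alternative there.

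\emph{Construction.} I would take $n = \Theta(m)$ agents and, aside from a small reserved set $X$ of ``hidden optimum'' candidates, partition the remaining alternatives into $\lambda+1$ tiers $G_0, G_1, \ldots, G_\lambda$ with sizes growing geometrically in the appropriate way (e.g.\ so that each tier can absorb its share of queries under the pigeonhole step below). The ordinal profile places tier $G_0$ in the top $|G_0|$ positions of every ranking, then $G_1$, and so on; within each tier's block, the intra-block order is a cyclic/Latin-square rotation across agents, so that each alternative in $G_\ell$ occupies each position of its block equally often and no ordinal fingerprint betrays an asymmetric adaptive strategy. Each alternative in the reserved set $X$ is inserted in a symmetric way into fixed positions across the rankings. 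The adversary commits, in advance, to the following answer policy: for any value query on an alternative in tier $G_\ell$, return the geometric default $r^{-\ell}$, which is the largest value consistent with the ordinal profile.

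\emph{Adversarial completion.} Since each agent's ranking has $\lambda+1$ tiers but the mechanism may make only $\lambda$ queries, a pigeonhole argument guarantees that for every agent $i$ there is a tier $\ell_i$ from which no alternative is queried. Once the mechanism outputs an alternative $z$, the adversary picks a tier index $\ell^\star$ for which at least a $1/(\lambda+1)$ fraction of agents satisfy $\ell_i = \ell^\star$, then selects an $x^\star \in X$ placed in tier $\ell^\star$ with $x^\star \neq z$. For every agent with $\ell_i = \ell^\star$, the adversary boosts the value of $x^\star$ to $r^{-(\ell^\star - 1)}$, one tier higher than the default for its block; this remains ordinally consistent because every alternative ranked above $x^\star$ in those rankings belongs to tier $\ell^\star - 1$ or higher and already carries at least the value $r^{-(\ell^\star - 1)}$. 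All other unqueried values are left at the geometric defaults. A direct calculation then gives $\mathrm{SW}(x^\star\,|\,\myv) = \Omega\bigl(\tfrac{n}{\lambda+1}\cdot r^{-(\ell^\star - 1)}\bigr)$ while $\mathrm{SW}(z\,|\,\myv) = O(n\cdot r^{-\ell^\star})$, and the ratio yields exactly the claimed $\Omega(r/(\lambda+1)) = \Omega\!\left(\tfrac{1}{\lambda+1}\, m^{1/(2(\lambda+1))}\right)$ distortion.

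\emph{Main obstacle.} The delicate part is making this robust to the adaptive nature of the queries: since the mechanism may concentrate many queries inside a single tier (learning it almost exactly), the adversary must guarantee that some \emph{other} tier still contains a viable $x^\star$ for a large enough fraction of agents. The Latin-square rotation handles the ordinal symmetry, and the pigeonhole handles the tier-counting, but tuning $n$, the sizes $|G_\ell|$, and the geometric base $r$ so that (i) every boosted completion is ordinally consistent, (ii) the average $\mathrm{SW}(z)$ cannot exceed the default geometric bound even when $z\in X$, and (iii) the factor $1/(\lambda+1)$ is the only loss in the distortion ratio, requires careful bookkeeping, and I expect this parameter-matching step to be the primary technical hurdle in the formal write-up.
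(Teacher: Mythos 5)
There is a genuine gap in your proposal, and it is not a bookkeeping issue but a structural one. In your construction every agent ranks the \emph{same} set $G_0$ in her top $|G_0|$ positions with default value $r^0 = 1$, and since unqueried values are ``left at the geometric defaults,'' any alternative in $G_0$ has social welfare $\Theta(n)$. Your boosted $x^\star$ only ever reaches $\mathrm{SW}(x^\star) = \Theta\!\left(\frac{n}{\lambda+1}\, r^{-(\ell^\star - 1)}\right) \le \frac{n}{\lambda+1}$, which is strictly \emph{smaller} than the welfare of any $G_0$ alternative. The mechanism can therefore just output a $G_0$ alternative without querying anything and obtain distortion $O(1)$, so the claimed ratio $\mathrm{SW}(x^\star)/\mathrm{SW}(z) = \Omega(r/(\lambda+1))$ never materializes. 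The paper avoids this by \emph{personalizing} the top $\lambda$ positions: each agent ranks a different alternative $a_{ij} \in A_j$ at position $j$, and ranks the \emph{other} alternatives of $\bigcup_j A_j$ in her tail with value zero. This caps the social welfare of every $A_j$ alternative at roughly $m^{j/(\lambda+1)} \cdot m^{-j/(\lambda+1)} \approx 1$, putting it on the same footing as the default welfare of $x_1, x_2$. Only with this leveling does the hidden boost of $x^\star$ above $1$ produce a superconstant ratio.

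The second gap is that your pigeonhole is correct but too weak. ``Every agent has an unqueried tier'' does not control \emph{which} tier is unqueried; an adaptive mechanism can always reserve one query per agent for whatever position the $X$ candidates occupy, so the special alternatives need never be hidden. The paper closes this loophole with a genuinely inductive argument (Claims~\ref{claim:general-base} and~\ref{claim:general-induction}): the adversary places a separate hidden-boost \emph{threat} at each of positions $1,\dots,\lambda$, so that under penalty of $\Omega\!\left(\frac{1}{\lambda+1} m^{\frac{1}{2(\lambda+1)}}\right)$ distortion the mechanism is forced to spend one query per level at each of those positions for a $\left(1-\frac{j}{\lambda+1}\right)$ fraction of the relevant agents. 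Only after all $\lambda$ queries are consumed does the conclusion ``positions $\lambda+1, \lambda+2$ are unqueried for $\Omega(m/(\lambda+1))$ agents'' follow. Your ``Main obstacle'' paragraph correctly identifies that adaptivity is the problem, but Latin-square rotations and parameter tuning cannot substitute for this step; one needs the level-by-level threat structure that forces the query budget to be spent exactly where the adversary wants it spent. As a smaller point, you would also need to handle the $\ell^\star = 0$ case, where there is no ``one tier higher'' to boost to.
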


\begin{proof}
Our instance consists of $m \geq \lambda$ alternatives and $n=m$ agents.
We partition the set $A$ of alternatives into the following $(\lambda+2)$ sets:
\begin{itemize}
\item $A_j$ with $|A_j|=m^{1-\frac{j}{\lambda+1}}$, for every $j \in [\lambda]$;
\item $X$ with $|X|=2$;
\item $Y$ with $|Y|=m - 2 - \sum_{j=1}^\lambda m^{1-\frac{j}{\lambda+1}}$.
\end{itemize}
We will now define the ordinal profile. 
Let $X=\{x_1,x_2\}$ and denote by $[z,w]$ the fact that alternatives $z$ and $w$ are ordered arbitrarily in the ranking of an agent. For every agent $i$ there exists an alternative $a_{ij} \in A_j$ for each $j \in [\lambda]$ such that:
\begin{itemize}
\item The ranking of agent $i \leq \frac{m}{2}$ is $a_{i1} \succ_i ... \succ_i a_{i\lambda} \succ_i x_1 \succ_i x_2 \succ_i [Y] \succ_i [\cup_{j \in [\lambda]} A_j \setminus \{a_{ij}\}]$;
\item The ranking of agent $i > \frac{m}{2}$ is $a_{i1} \succ_i ... \succ_i a_{i\lambda} \succ_i x_2 \succ_i x_1 \succ_i [Y] \succ_i [\cup_{j \in [\lambda]} A_j \setminus \{a_{ij}\}]$;
\end{itemize}
In words, every agent $i$ ranks some alternative $a_{ij} \in A_j$ at position $j \in [\lambda]$, followed by the two alternatives of $X=\{x_1,x_2\}$ at positions $(\lambda+1)$ and $(\lambda+2)$, followed by all alternatives of $Y$ (in an arbitrary order), followed by the alternatives of $\cup_{j\in [\lambda]} A_j \setminus \{a_{ij}\}$ (in an arbitrary order). Observe that the alternatives of $Y$ are all dominated by the alternatives of $X$ in the sense that both $x_1$ and $x_2$ have at least as much social welfare as any alternative of $Y$. 
The choices as to how the alternatives of $\cup_{j \in [\lambda]} A_j$ are distributed in the rankings of the agents  are such that:  
\begin{itemize}
\item Each alternative of $A_j$ appears $m^{\frac{j}{\lambda+1}}$ times at position $j \in [\lambda]$;
\item For any $j\in [\lambda-1]$ and pair of agents $i,i'$ such that $a_{ij}=a_{i'j}$, it holds that $a_{i,j+1}=a_{i',j+1}$.
\end{itemize}
Hence, the agents with the same favorite alternative have exactly the same ranking. To simplify our discussion in what follows, we refer to the alternatives in $Y$ and $\cup_{j \in [\lambda]} A_j \setminus \{a_{ij}\}$ as the {\em tail} alternatives of agent $i$. Let $T_j(z)$ be the set of the $m^{\frac{j}{\lambda+1}}$ agents that rank alternative $z \in A_j$ at position $j \in [\lambda]$. \cref{fig:instance-unrestricted} depicts the ordinal profile of our instance for $\lambda=2$. 

\begin{figure}[ph!]
\centering
\includegraphics[scale=1]{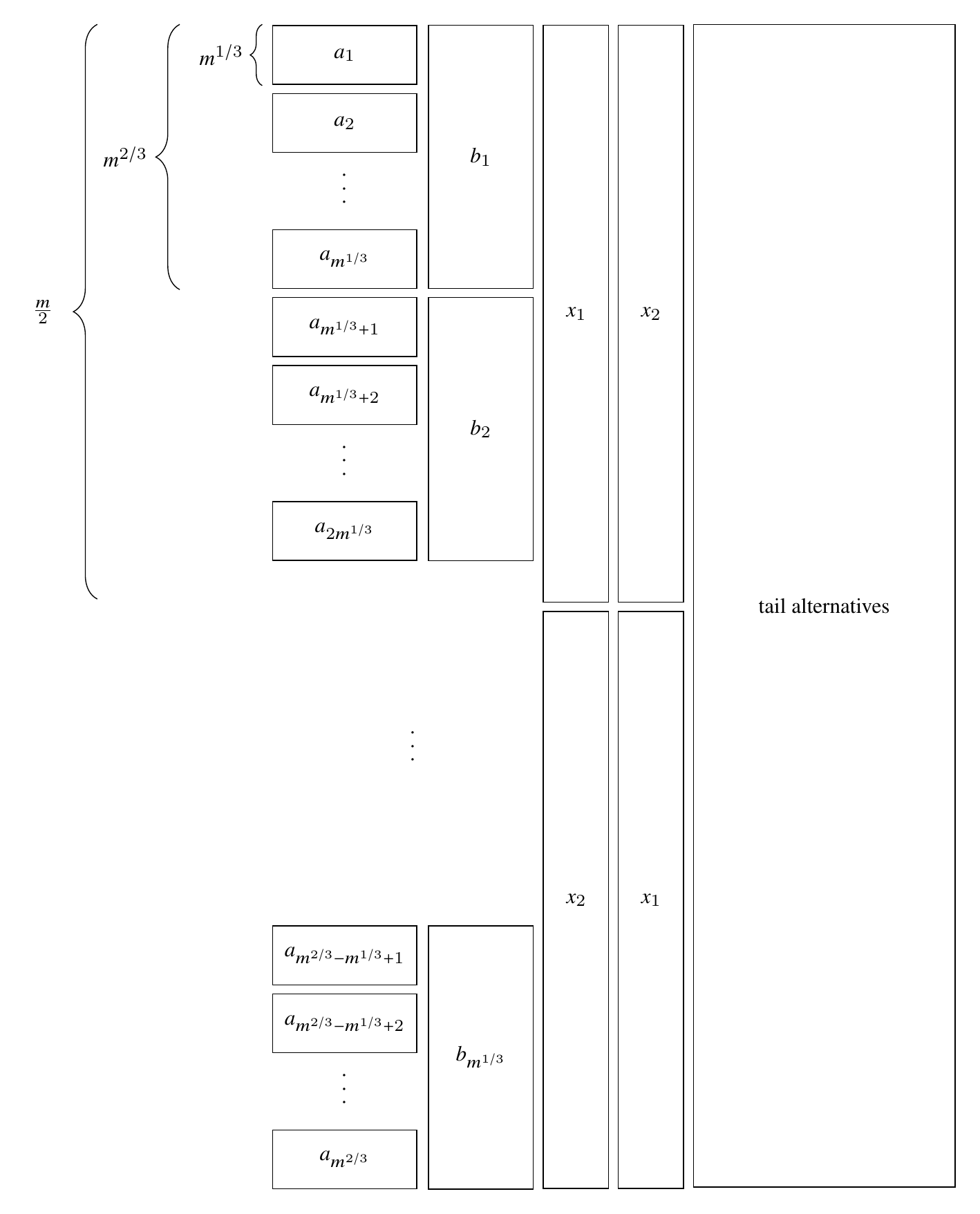}
\caption{An example of the instance used in the proof of \cref{thm:lower-general-unrestricted} for $\lambda=2$; for convenience, we denote here the alternatives of $A_1$ and $A_2$ as $A_1 = \{a_1, ..., a_{m^{2/3}}\}$ and $A_2=\{b_1, ..., b_{m^{1/3}}\}$.}
\label{fig:instance-unrestricted}
\end{figure}

Let $\M$ be an arbitrary mechanism that makes $\lambda$ value queries per agent. Naturally, we assume that $\M$ does not elect any dominated alternative from $Y$. The cardinal information that is revealed due to the queries of $\M$ is as follows:
\begin{itemize}
\item Each of the first $\frac{\lambda}{\lambda+1} \cdot m^{\frac{j}{\lambda+1}}$ queries of $\M$ for any alternative $z \in A_j$, $j \in [\lambda]$ reveals a value of $m^{-\frac{j+1/2}{\lambda+1}}$, while each of the remaining $\frac{1}{\lambda+1} \cdot m^{\frac{j}{\lambda+1}}$ queries of $\M$ for $z$ reveals a value of $m^{-\frac{j}{\lambda+1}}$.

\item Every query of $\M$ for an alternative of $X$ reveals a value of $m^{-1}$.

\item Every query of $\M$ for a tail alternative reveals zero value.  
\end{itemize}
To simplify our notation in the rest of the proof, let $\one_{ij}$ be the indicator variable:
\begin{align*}
\one_{ij} =
\begin{cases}
1, & \text{if $\M$ asks agent $i$ for $a_{ij}$ and has previously asked} \\
   & \text{strictly less than $\frac{\lambda}{\lambda+1} \cdot m^{\frac{1}{\lambda+1}}$ other agents of $T_j(a_{ij})$ for $a_{ij}$} \\
0, & \text{otherwise.}
\end{cases}
\end{align*} 

Now, assume towards a contradiction that $\M$ has distortion $\mathcal{D}(\M) \not\in \Omega\left(\frac{1}{\lambda+1} \cdot m^{\frac{1}{2(\lambda+1)}}\right)$. Using the next two claims, we will show by induction that $\M$ must query a large proportion of the agents at the first $\lambda$ positions, since otherwise the distortion of $\M$ would be $\Omega\left(\frac{1}{\lambda+1} \cdot m^{\frac{1}{2(\lambda+1)}}\right)$.

\begin{claim}\label{claim:general-base}
The mechanism $\M$ must ask at the first position strictly more than $\frac{\lambda}{\lambda+1} \cdot m^{\frac{1}{\lambda+1}}$ of the agents in $T_1(z)$ for every alternative $z \in A_1$.
\end{claim}

\begin{claim}\label{claim:general-induction}
Given that for every alternative $z \in A_j$, $j \in [\lambda-1]$ the mechanism $\M$ asks at the first $j$ positions strictly more than $\left(1-\frac{j}{\lambda+1}\right) \cdot m^{\frac{j}{\lambda+1}}$ of the agents in $T_j(z)$, $\M$ must ask at the first $j+1$ positions strictly more than $\left(1-\frac{j+1}{\lambda+1}\right) \cdot m^{\frac{j+1}{\lambda+1}}$ of the agents in $T_{j+1}(w)$ 
for every alternative $w \in A_{j+1}$.
\end{claim} 

By \cref{claim:general-base} and \cref{claim:general-induction}, $\M$ must ask at the first $\lambda$ positions strictly more than $(1-\frac{\lambda}{\lambda+1}) \cdot m^{\frac{\lambda}{\lambda+1}} = \frac{1}{\lambda+1} \cdot m^{\frac{\lambda}{\lambda+1}} $ of the agents in $T_\lambda(z)$ for every alternative $z \in A_\lambda$. 
Consequently, since $A_\lambda$ consists of exactly $m^{\frac{1}{\lambda+1}}$ alternatives, there are at least $m^{\frac{1}{\lambda+1}} \cdot \frac{1}{\lambda+1} \cdot m^{\frac{\lambda}{\lambda+1}} = \frac{1}{\lambda+1}\cdot m$ agents that are {\em not} queried at positions $(\lambda+1)$ and $(\lambda+2)$ for the alternatives of $X = \{x_1,x_2\}$.
Let $S$ be the set of these $\frac{1}{\lambda+1}\cdot m$ agents; observe that half of them rank $x_1$ ahead of $x_2$ and half of them rank $x_1$ below $x_2$, which follows by the fact that $S$ includes the same number of agents per alternative of $A_\lambda$ and the definition of the ordinal profile. Further, we define two more sets of agents: 
$S_{12} = \{i \in S: i \leq \frac{m}{2}\}$ and $S_{21} = S \setminus S_{12}$. Observe that all agents of $S_{12}$ rank alternative $x_1$ ahead of $x_2$, and all agents of $S_{21}$ rank $x_2$ ahead of $x_1$. 

Now, we define three valuation profiles $\myv_1$, $\myv_2$ and $\myv_3$, which are consistent to the ordinal profile and the cardinal information revealed by the queries of $\M$, but differ on the values that the agents in $S$ have for the alternatives in $X$; in particular, $\myv_2$ and $\myv_3$ are symmetric. 
\begin{itemize}
\item In all three profiles, every agent $i \in [m]$ has value $\one_{ij} \cdot m^{-\frac{j+1/2}{\lambda+1}} + (1-\one_{ij}) \cdot m^{-\frac{j}{\lambda+1}}$ for the alternative $a_{ij} \in A_j$ that she ranks at position $j \in [\lambda]$, and zero value for her tail alternatives;

\item In all three profiles, every agent $i \not\in S$ has value $m^{-1}$ for both $x_1$ and $x_2$;

\item In $\myv_1$, every agent $i \in S$ has value $m^{-\frac{\lambda+1/2}{\lambda+1}}$ for both $x_1$ and $x_2$;

\item In $\myv_2$, every agent $i \in S_{12}$ has value $m^{-1}$ for both $x_1$ and $x_2$, 
while every agent $i \in S_{21}$ has value $m^{-1}$ for $x_1$ and value $m^{-\frac{\lambda+1/2}{\lambda+1}}$ for $x_2$. 

\item In $\myv_3$, every agent $i \in S_{21}$ has value $m^{-1}$ for both $x_1$ and $x_2$, 
while every agent $i \in S_{12}$ has value $m^{-1}$ for $x_2$ and value $m^{-\frac{\lambda+1/2}{\lambda+1}}$ for $x_1$. 
\end{itemize}
Next, we compute the social welfare of each alternative for the different valuation profiles:
\begin{itemize}
\item The social welfare of every alternative $z \in Y$ is
\begin{align*}
\SW(z \,|\, \myv_1) = \SW(z \,|\, \myv_2)  = \SW(z \,|\, \myv_3) = 0.
\end{align*}

\item The social welfare of every alternative $z \in \cup_{j \in [\lambda]} A_j$ is
\begin{align*}
\SW(z \,|\, \myv_1) 
= \SW(z \,|\, \myv_2)  
= \SW(z \,|\, \myv_3) 
&= \frac{\lambda}{\lambda+1} \cdot m^{\frac{j}{\lambda+1}} \cdot m^{-\frac{j+1/2}{\lambda+1}} + \frac{1}{\lambda+1} \cdot m^{\frac{j}{\lambda+1}}  \cdot m^{-\frac{j}{\lambda+1}}  \\
&= \frac{\lambda}{\lambda+1} \cdot m^{-\frac{1}{2(\lambda+1)}} + \frac{1}{\lambda+1}
\leq 1.
\end{align*}

\item The social welfare of $x_1$ is 
\begin{align*}
& \SW(x_1 \,|\, \myv_1) \geq \frac{1}{\lambda+1} \cdot m \cdot m^{-\frac{\lambda+1/2}{\lambda+1}} = \frac{1}{\lambda+1} \cdot m^{\frac{1}{2(\lambda+1)}}, \\
& \SW(x_1 \,|\, \myv_2) = m \cdot m^{-1} =1, \\
& \SW(x_1 \,|\, \myv_3) \geq \frac{1}{2} \cdot \frac{1}{\lambda+1} \cdot m \cdot m^{-\frac{\lambda+1/2}{\lambda+1}} = \frac{1}{2(\lambda+1)} \cdot m^{\frac{1}{2(\lambda+1)}}.
\end{align*}

\item The social welfare of $x_2$ is 
\begin{align*}
& \SW(x_2 \,|\, \myv_1) \geq \frac{1}{\lambda+1} \cdot m \cdot m^{-\frac{\lambda+1/2}{\lambda+1}} = \frac{1}{\lambda+1} \cdot m^{\frac{1}{2(\lambda+1)}}, \\
& \SW(x_2 \,|\, \myv_2) \geq \frac{1}{2}\cdot \frac{1}{\lambda+1} m \cdot m^{-\frac{\lambda+1/2}{\lambda+1}} = \frac{1}{2(\lambda+1)}\cdot m^{\frac{1}{2(\lambda+1)}}, \\
& \SW(x_2 \,|\, \myv_3) = m \cdot m^{-1} =1.
\end{align*}
\end{itemize}

Depending on the choices of the mechanism $\M$, we set the valuation profile to be one of $\myv_1$, $\myv_2$ and $\myv_3$ so that the distortion is as high as possible. In particular, we have:
\begin{itemize}
\item If $\M$ selects any alternative $z \in \cup_{j \in [\lambda]} A_j$, we set the valuation profile to be $\myv_1$. Hence, the social welfare of the winner $z$ is at most $1$, while any alternative of $X$ is optimal with social welfare at least $\frac{1}{\lambda+1} \cdot m^{\frac{1}{2(\lambda+1)}}$, yielding distortion at least $\frac{1}{\lambda+1} \cdot m^{\frac{1}{2(\lambda+1)}}$. 

\item If $\M$ selects alternative $x_1$, we set the valuation profile to be $\myv_2$.
Hence, the social welfare of the winner $x_1$ is exactly $1$, while $x_2$ is the optimal alternative with social welfare at least $\frac{1}{2(\lambda+1)} \cdot m^{\frac{1}{2(\lambda+1)}}$, yielding distortion at least $\frac{1}{2(\lambda+1)} \cdot m^{\frac{1}{2(\lambda+1)}}$.

\item If $\M$ selects alternative $x_2$, we set the valuation profile to be $\myv_3$, which is symmetric to the previous case and again yields distortion at least $\frac{1}{2(\lambda+1)} \cdot m^{\frac{1}{2(\lambda+1)}}$. 
\end{itemize}
Therefore, the distortion of $\M$ is $\Omega\left(\frac{1}{\lambda+1} \cdot m^{\frac{1}{2(\lambda+1)}}\right)$ and the proof of the theorem is now complete; the proofs of \cref{claim:general-base} and \cref{claim:general-induction} can be found in the appendix.
\end{proof}

Using \cref{thm:lower-general-unrestricted}, we can show several lower bounds on the distortion of any mechanism, depending on the number of queries that it makes per agent. In particular, we have the following statement.

\begin{corollary}\label{cor:lower-general-unrestricted}
For unrestricted valuation functions, the distortion of any mechanism $\M$ that uses $\lambda$ queries per agent is
\begin{align*}
\mathcal{D}(\M) = 
\begin{cases}
\Omega\left(m^{\frac{1}{2(\lambda+1)}}\right), & \text{for any constant } \lambda \geq 1 \\
\Omega\left(\log\log{m}\right), & \text{for } \lambda = O\left(\frac{\log{m}}{\log\log{m}}\right).
\end{cases}
\end{align*}
\end{corollary}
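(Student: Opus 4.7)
The plan is to derive both cases as direct corollaries of \cref{thm:lower-general-unrestricted}, which already does the heavy lifting by establishing the bound $\mathcal{D}(\M) = \Omega\!\left(\frac{1}{\lambda+1} \cdot m^{\frac{1}{2(\lambda+1)}}\right)$. What remains is only the asymptotic bookkeeping for the two specified regimes of $\lambda$.

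For the first case, I would fix $\lambda \geq 1$ to be a constant. Then $\frac{1}{\lambda+1}$ is a positive constant depending only on $\lambda$, and in particular can be absorbed into the hidden constant of $\Omega(\cdot)$. This immediately yields $\mathcal{D}(\M) = \Omega\!\left(m^{\frac{1}{2(\lambda+1)}}\right)$, completing the first branch.

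For the second case, I would substitute $\lambda = c \cdot \frac{\log m}{\log\log m}$ for a sufficiently small constant $c > 0$ (e.g.\ $c = 1/2$) and compute the two factors separately. On the one hand, using $\lambda+1 \sim \frac{c \log m}{\log\log m}$, I get
\begin{equation*}
m^{\frac{1}{2(\lambda+1)}} \;=\; 2^{\frac{\log m}{2(\lambda+1)}} \;\sim\; 2^{\frac{\log\log m}{2c}} \;=\; (\log m)^{\frac{1}{2c}},
\end{equation*}
which for $c = 1/2$ gives $\Theta(\log m)$. On the other hand, $\frac{1}{\lambda+1} \sim \frac{\log\log m}{c \log m}$. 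Multiplying the two asymptotics yields a bound of order $\Theta(\log\log m)$ exactly at $c = 1/2$, and any larger order for smaller $c$. Since the hypothesis $\lambda = O(\log m / \log\log m)$ allows us to choose the constant in front of $\log m / \log\log m$ to be at most $1/2$ (the lower bound is monotone non-increasing in $\lambda$, so proving it for this borderline value of $\lambda$ establishes it for every smaller one), we conclude $\mathcal{D}(\M) = \Omega(\log\log m)$.

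There is no real obstacle here beyond the standard manipulation of $\log$ expressions; the only thing to be careful about is the direction of monotonicity, namely that the bound from \cref{thm:lower-general-unrestricted} is weakest (smallest) when $\lambda$ is as large as possible within the allowed range, so it suffices to verify the claim at the upper boundary $\lambda = \Theta(\log m / \log\log m)$ with a small enough constant, from which the case of all smaller $\lambda$ follows \emph{a fortiori}.
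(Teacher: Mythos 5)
Your derivation is correct and matches the paper's (implicit) derivation — the paper states this corollary without proof, as it is a direct computation from \cref{thm:lower-general-unrestricted}. The first case is immediate, and your substitution $\lambda \sim c\log m/\log\log m$ in the second case is exactly the right calculation: $\frac{1}{\lambda+1}\cdot m^{\frac{1}{2(\lambda+1)}} = \Theta\bigl(\log\log m \cdot (\log m)^{\frac{1}{2c}-1}\bigr)$, which is $\Theta(\log\log m)$ at $c=1/2$ and larger for smaller $c$. Your monotonicity observation — that the bound $\frac{1}{\lambda+1}m^{\frac{1}{2(\lambda+1)}}$ is decreasing in $\lambda$, so verifying at the upper boundary suffices — is sound and needed.

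One small caveat worth being explicit about: the claim $\Omega(\log\log m)$ only follows from \cref{thm:lower-general-unrestricted} when the implied constant in ``$\lambda = O(\log m/\log\log m)$'' is at most $1/2$ (in the same base of logarithm throughout); for $\lambda = c\log m/\log\log m$ with $c > 1/2$, the theorem only gives $\Omega\bigl(\log\log m / (\log m)^{1-\frac{1}{2c}}\bigr)$, which is $o(\log\log m)$ though still $\omega(1)$. Your phrasing ``allows us to choose the constant... to be at most $1/2$'' slightly misdescribes the logic — the hypothesis does not let you choose anything — but you correctly implement the intended reading of the corollary (namely, a superconstant lower bound persists for all $\lambda$ up to some threshold in $\Theta(\log m/\log\log m)$, which is what is needed for the paper's punchline that constant distortion requires $\omega(\log m/\log\log m)$ queries). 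This imprecision is inherited from the corollary statement itself and is not a flaw in your argument.
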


\subsection{One-Query Mechanisms with Unit-Sum Valuations}
Next, we turn our attention to unit-sum valuation functions. Coming up with constructions that satisfy the very restricted structure of such valuation functions and at the same time capture {\em all} mechanisms is quite challenging. In the following, we consider mechanisms that are allowed to make only one value query per agent. For this case, we are able to show a weaker lower bound of $\Omega(\sqrt{m}\verythinspace)$, which indicates (but does not prove) some separation between unrestricted and unit-sum valuation functions. 

\begin{theorem}\label{thm:lower-one-unitsum}
For unit-sum valuation functions, the distortion of any mechanism that uses only one value query per agent is $\Omega(\sqrt{m}\verythinspace)$. 
\end{theorem}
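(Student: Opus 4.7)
The plan is to transplant the three-profile adversarial construction used in the proof of \cref{thm:lower-one-unrestricted} into the unit-sum regime, shrinking the cardinal scale by a factor of $\sqrt{m}$ so the achievable gap becomes $\Omega(\sqrt{m}\verythinspace)$ instead of $\Omega(m)$. The instance keeps $n=m$ agents, alternative set $\{a_1,\dots,a_{m-2},x,y\}$, and the same ordinal profile: agent $i\le m/2$ ranks $a_i\succ_i x\succ_i y\succ_i [\text{others}]$ and agent $i>m/2$ ranks $a_i\succ_i y\succ_i x\succ_i[\text{others}]$, with the tails at positions $4,\ldots,m$ chosen carefully (see below). Setting $k=\lfloor\sqrt{m}\verythinspace\rfloor$, the adversary pre-commits to reveal $1/k$ at any query to position $1$, $1/k$ at any query to position $2$ or $3$, and the ``uniform residual'' $(1-3/k)/(m-3)=\Theta(1/m)$ at any query to a deeper position. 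These answers are certified by the \emph{default} unit-sum profile in which every agent places value $1/k$ on each of her top three alternatives and spreads the remaining $1-3/k$ units of mass uniformly over positions $4,\ldots,m$.

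The first step is to argue, by a toggling argument analogous to the opening of the proof of \cref{thm:lower-one-unrestricted}, that the mechanism must query every agent at position $1$: if some agent $i_0$ is instead queried at a deeper position, I build an alternative unit-sum profile consistent with all revealed values in which the top value $v_{i_0,a_{i_0}}$ is pushed as high as unit-sum and ordinal monotonicity permit---of order $1-O(1/\sqrt{m}\verythinspace)$. Under the default profile $x$ and $y$ are the welfare-maximizers with $\SW=\Theta(\sqrt{m}\verythinspace)$ while $\SW(a_{i_0})=\Theta(1)$, whereas under the toggled profile $a_{i_0}$ becomes a unique welfare-maximizer with $\SW(a_{i_0})=\Theta(1)$ and every other alternative drops to $O(1/\sqrt{m}\verythinspace)$; any deterministic output of the mechanism therefore suffers distortion $\Omega(\sqrt{m}\verythinspace)$ on one of the two profiles. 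Hence we may assume every query goes to position $1$, which means no query touches positions $2,3$ or the tail---precisely the flexibility needed in the second step.

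Assuming that every agent is queried only at position $1$ with revealed value $1/k$, I construct three unit-sum profiles $\myv_1,\myv_2,\myv_3$ all matching the revealed values. In $\myv_1$ (the default) every agent spreads value $1/k$ over her top $k$ alternatives, so $\SW(x)=\SW(y)=\Theta(\sqrt{m}\verythinspace)$ and $\SW(a_i)=\Theta(1)$ for every common alternative. In $\myv_2$ the first-half agents are unchanged, but the second-half agents redistribute the mass above $x$ onto the tail so that $v_{iy}=1/k$ while $v_{ix}=\Theta(1/m)$, making $\SW(y)=\Theta(\sqrt{m}\verythinspace)$ the unique near-optimum and $\SW(x)=\Theta(1)$; $\myv_3$ is the symmetric swap of $x$ and $y$. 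The adversary punishes outputs that are a common $a_i$, $x$, or $y$ with $\myv_1$, $\myv_3$, or $\myv_2$ respectively, yielding distortion $\Omega(\sqrt{m}\verythinspace)$ in every case. The main obstacle is to verify that $\myv_2$ and $\myv_3$ jointly satisfy unit-sum and the ordinal monotonicity: for second-half agents in $\myv_2$ the chain $v_{iy}\ge v_{ix}\ge v_{i,j_i(4)}\ge\dots$ together with $v_{iy}=1/k$ and $v_{ix}=\Theta(1/m)$ forces the $1-1/k$ residual mass into the tail in a shape that is compatible with the choice of tail in the ordinal profile while keeping the welfare of every common alternative bounded by a constant. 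Designing the tails of the ordinal profile carefully enough to accommodate this redistribution simultaneously in both the $x\to y$ and $y\to x$ directions is the technical heart of the argument.
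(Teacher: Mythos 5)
Your overall plan mirrors the structure of \cref{thm:lower-one-unrestricted}, but the paper's actual proof of \cref{thm:lower-one-unitsum} is a different construction: it uses $n=\sqrt{m}$ agents (not $n=m$), partitions the alternatives into sets $B,C,D,E$ where $D$ and $E$ serve as explicitly-sized \emph{buffers} to absorb unit-sum mass, reveals $1/\sqrt{m}$ at position $1$, $1/m$ on $C\cup D$ and $0$ on the tail, and then runs a three-way case analysis on the mechanism's output and on whether $|S_1|=n$. There is no reduction to ``all queries at position $1$''---the case $|S_1|<n$ is handled directly by a separate profile.

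More importantly, your step $1$ has a fatal error. You claim that when some agent $i_0$ is queried away from position $1$, there is a unit-sum profile consistent with all revealed values in which ``$a_{i_0}$ becomes a unique welfare-maximizer with $\SW(a_{i_0})=\Theta(1)$ and every other alternative drops to $O(1/\sqrt{m}\verythinspace)$.'' This cannot happen in your instance. Every other agent $i\neq i_0$ that is queried at position $1$ has the hard constraint $v_{i,a_i}=1/\sqrt{m}$; for such an agent, $x$ sits at position $2$ or $3$, so all $m-2$ alternatives ranked at position $2$ onward have value at most $v_{ix}$, and the unit-sum condition then forces
\[
1 \;=\; \sum_{j}v_{ij} \;\le\; \frac{2}{\sqrt{m}} + (m-2)\,v_{ix},
\]
i.e.\ $v_{ix}\ge \frac{1-2/\sqrt{m}}{m-2} = \Theta(1/m)$. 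Summing over the $\ge m-1$ such agents already gives $\SW(x\,|\,\myv)=\Omega(1)$ on \emph{every} profile consistent with the revealed information---there is no way to push $\SW(x)$ down to $O(1/\sqrt{m}\verythinspace)$. Since $\SW(a_{i_0})=O(1)$ as well (agent $i_0$ contributes at most $1$ and the others place $a_{i_0}$ in their tails), the ratio between $a_{i_0}$ and the output is $O(1)$, not $\Omega(\sqrt{m}\verythinspace)$. So the toggling step does not force position-$1$ queries, and without it your step-$2$ argument is only proven for the (unforced) special case $|S_1|=n$.

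Beyond this, you explicitly flag that the feasibility of $\myv_2,\myv_3$ under unit-sum and ordinal monotonicity ``is the technical heart of the argument'' and depends on a careful tail design that you do not supply. That, together with the broken step $1$, means the argument as written is not a proof. The paper's use of $n=\sqrt{m}$ agents is not cosmetic: it lowers the unavoidable welfare floor of the contested alternatives to $\Theta(1/\sqrt{m})$ rather than $\Theta(1)$, which is exactly what makes a $\sqrt{m}$-gap possible in the unit-sum regime and is what your $n=m$ construction cannot reproduce.
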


\begin{proof}
Consider an instance with $m \geq 4$ alternatives and $n=\sqrt{m}$ agents.
We partition the set $A$ of alternatives into the following four sets:
\begin{itemize}
\item $B = \{b_1, ..., b_{\sqrt{m}}\}$ with $|B| = \sqrt{m}$;
\item $C = \{c_1, c_2\}$ with $|C| = 2$;
\item $D = \{d_1, ..., d_{\sqrt{m}-3}\}$ with $|D| = \sqrt{m}-3$;
\item $E = \{e_1, ..., e_{m-2\sqrt{m}+1}\}$ with $|E| = m - 2\sqrt{m}+1$.
\end{itemize}
Using the notation $[z,w]$ to denote the fact that alternatives $z$ and $w$ are ordered arbitrarily in the ranking of an agent, we define the following ordinal profile:
\begin{itemize}
\item 
The ranking of agent $i \leq \frac{n}{2}$ is $b_i \succ_i c_1 \succ_i c_2 \succ_i [D] \succ_i [E] \succ_i [B \setminus \{b_i\}]$

\item 
The ranking of agent $i > \frac{n}{2}$ is $b_i \succ_i c_2 \succ_i c_1 \succ_i [D] \succ_i [E] \succ_i [B \setminus \{b_i\}]$
\end{itemize} 
Observe that the alternatives of $D \cup E$ are all dominated by alternatives $c_1$ and $c_2$ in the sense that electing $c_1$ or $c_2$ always yields social welfare that is at least as much as the social welfare of any alternative in $D \cup E$. We refer to the alternatives of $E \cup B \setminus \{b_i\}$ as the tail alternatives of agent $i$. 

Let $\M$ be any mechanism that makes one query per agent. 
Naturally, we assume that $\M$ does not elect any dominated alternative from $D \cup E$. 
The possible queries of $\M$ reveal the following cardinal information:
\begin{itemize}
\item A query for the favorite alternative of an agent (at the first position) reveals a value of $\frac{1}{\sqrt{m}}$;
\item A query for an alternative in $C \cup D$ reveals a value of $\frac{1}{m}$;
\item Any other query (for a tail alternative) reveals zero value.
\end{itemize}
We define the following sets of agents, depending on the function of $\M$:
\begin{itemize}
\item $S_1$ is the set of agents queried at the first position (for their favorite alternative);
\item $S_C$ is the set of agents queried for some alternative in $C$; 
\item $S_D$ is the set of agents queried for some alternative in $D$;
\item $S_>$ is the set of agents queried for some tail alternative.
\end{itemize}
Next, we distinguish between three cases, depending on the alternative that $\M$ elects.

\paragraph{Case I: $\M$ selects alternative $c_1$ (the case of $c_2$ is symmetric)} \ \\
\noindent 
If $|S_1| < n$, we define the following valuation profile $\myv$:
\begin{itemize}
\item For every agent $i \in S_C$, we set $v_{i,b_i} = 1- \frac{2}{m}$ and $v_{i,c_1}=v_{i,c_2}= \frac{1}{m}$; 
the value for all other alternatives is zero.

\item For every agent $i \in S_D$, we set $v_{i,b_i} = 1 -\frac{\sqrt{m}-1}{m}$ and
$v_{i,c_1}= v_{i,c_2}= v_{i,d_j} = \frac{1}{m}$ for $j \in [\sqrt{m}-3]$;
the value for all other alternatives is zero.  

\item For every agent $i \in S_>$, we set $v_{i,b_i} = 1$;
the value for all other alternatives is zero. 

\item For every agent $i \in S_1$, we set $v_{i,b_i} = \frac{1}{\sqrt{m}}$, and split the remaining value of $1-\frac{1}{\sqrt{m}}$ equally among all other $m-1$ alternatives so that for each of them the value of agent $i$ is $\frac{\sqrt{m}-1}{\sqrt{m}(m-1)}$.  
\end{itemize}
Hence, alternative $c_1$ has social welfare
\begin{align*}
\SW(c_1 \,|\, \myv) &= \bigg( |S_{C}| + |S_{D}| \bigg) \cdot \frac{1}{m} + |S_>| \cdot 0 + |S_1| \cdot \frac{\sqrt{m}-1}{\sqrt{m}(m-1)} \\
&\leq \bigg(|S_1| + |S_{C}| + |S_{D}| \bigg) \cdot  \frac{1}{m} \\
&\leq \frac{1}{\sqrt{m}} \,,
\end{align*}
where the first inequality follows since $\frac{\sqrt{m}-1}{\sqrt{m}(m-1)} \leq \frac{1}{m} \Leftrightarrow \sqrt{m} \leq m$, and the second follows by the fact that $|S_1| + |S_{C}| + |S_{D}| \leq n = \sqrt{m}$.  
Since $|S_1| < n$, there exists an agent $i^* \in S_{C} \cup S_{D} \cup S_>$ such that her favorite alternative $b_{i^*}$ has social welfare 
\begin{align*}
\SW(b_{i^*} \,|\, \myv) \geq 1 -\frac{\sqrt{m}-1}{m} \geq 1 - \frac{1}{\sqrt{m}} \,.
\end{align*}
As a result, the distortion is at least $\sqrt{m} - 1$.

If $|S_1| = n$, we define the following valuation profile $\myv$:
\begin{itemize}
\item For every agent $i \leq \frac{n}{2}$, we set $v_{i,b_i}=\frac{1}{\sqrt{m}}$, and split the remaining value of $1-\frac{1}{\sqrt{m}}$ equally among all other $m-1$ alternatives so that for each of them the value of agent $i$ is $\frac{\sqrt{m}-1}{\sqrt{m}(m-1)}$. 

\item For every agent $i > \frac{n}{2}$, we set $v_{i,b_i} = v_{i,c_2} = \frac{1}{\sqrt{m}}$, and split the remaining value of $1-\frac{2}{\sqrt{m}}$ equally among all other $m-2$ alternatives so that for each of them the value of agent $i$ is $\frac{\sqrt{m}-2}{\sqrt{m}(m-2)}$.
\end{itemize}
Hence, alternative $c_1$ has social welfare
\begin{align*}
\SW(c_1 \,|\, \myv) &= \frac{n}{2} \cdot \left( \frac{\sqrt{m}-1}{\sqrt{m}(m-1)} + \frac{\sqrt{m}-2}{\sqrt{m}(m-2)}  \right) \\
&\leq \frac{n}{2} \cdot \frac{2\sqrt{m}-3}{\sqrt{m}(m-2)}.
\end{align*}
On the other hand, alternative $c_2$ has social welfare
\begin{align*}
\SW(c_2 \,|\, \myv) &= \frac{n}{2} \cdot \left( \frac{\sqrt{m}-1}{\sqrt{m}(m-1)} + \frac{1}{\sqrt{m}}  \right) \\
&= \frac{n}{2} \cdot \frac{m + \sqrt{m}-2}{\sqrt{m}(m-1)}.
\end{align*}
Consequently, the distortion is at least
\begin{align*}
\frac{\SW(c_2 \,|\, \myv)}{\SW(c_1 \,|\, \myv)} \geq \frac{m-2}{m-1} \cdot \frac{m+\sqrt{m}-2}{2\sqrt{m}-3} \geq \frac{1}{2} \sqrt{m},
\end{align*}
where the last inequality holds for any $m \geq 3$.

\paragraph{Case II: $|S_1| \geq 1$ and $\M$ selects some alternative $b_{i^*}$ for $i^* \in S_1$.} \ \\
\noindent
If $|S_1| < n$, we define the following valuation profile $\myv$:
\begin{itemize}
\item For every agent $i \in S_{C}$, we set $v_{i,b_i} = 1- \frac{2}{m}$ and $v_{i,c_1}=v_{i,c_2}= \frac{1}{m}$;
the value for all other alternatives is zero.

\item For every agent $i \in S_{D}$, we set $v_{i,b_i} = 1 - \frac{1}{\sqrt{m}} + \frac{1}{m}$ and
$v_{i,c_1}= v_{i,c_2}= v_{i,d_j} = \frac{1}{m}$ for $j \in [\sqrt{m}-3]$;
the value for all other alternatives is zero.

\item For every $i \in S_>$, we set $v_{i,b_i} = 1$;
the value for all other alternatives is zero.

\item For every $i \in S_1$, we set $v_{i,b_i} = \frac{1}{\sqrt{m}}$, and split the remaining value of $1-\frac{1}{\sqrt{m}}$ equally among the $m - \sqrt{m}$ alternatives of $C \cup D \cup E$, while the value for the alternatives of $B \setminus \{b_i\}$ is zero. 
\end{itemize}
Hence, the social welfare of alternative $b_{i^*}$ is 
$$\SW(b_{i^*} \,|\, \myv) = \frac{1}{\sqrt{m}}.$$ 
Since $|S_1| < n$, there exists an agent $i \in S_{C} \cup S_{D} \cup S_>$ such that alternative $b_i$ has social welfare 
$$ \SW(b_i \,|\, \myv) \geq 1 - \frac{1}{\sqrt{m}} + \frac{1}{m} \geq 1 - \frac{1}{\sqrt{m}},$$ 
and therefore the distortion is at least $\sqrt{m} - 1$.

If $|S_1| = n$, we define the following valuation profile $\myv$:
\begin{itemize}
\item For every agent $i \in [n]$, we set $v_{i,b_i} = v_{i,c_1}=v_{i,c_2}= \frac{1}{\sqrt{m}}$, and split the remaining value of $1-\frac{3}{\sqrt{m}}$ equally among the $m - \sqrt{m}-2$ alternatives of $D \cup E$; the value for the alternatives of $B \setminus \{b_i\}$ is zero. 
This is a valid valuation definition since the value for each alternative in $D \cup E$ is $\frac{\sqrt{m}-3}{\sqrt{m}(m-\sqrt{m}-2)} \leq \frac{1}{\sqrt{m}} \Leftrightarrow (\sqrt{m}-1)^2\geq 0$.
\end{itemize}
Hence, alternative $b_{i^*}$ has social welfare
$$\SW(b_{i^*} \,|\, \myv) = \frac{1}{\sqrt{m}}.$$ 
But now, the social welfare of $c_1$ and $c_2$ is equal to 
$$\SW(c_1 \,|\, \myv) = \SW(c_2 \,|\, \myv) = n \frac{1}{\sqrt{m}} = 1,$$ 
yielding distortion that is at least $\sqrt{m}$.

\paragraph{Case III: $|S_C| + |S_D| + |S_>|  \geq 1$ and $\M$ selects some alternative $b_{i^*}$ for $i^* \in S_C \cup S_D \cup S_>$} \ \\
\noindent
We define the following valuation profile $\myv$:
\begin{itemize}
\item If $i^* \in S_C \cup S_D$, we set $v_{i^*j} = \frac{1}{m}$ for every alternative $j \in A$. 
If $i^* \in S_>$, we split the total value of $1$ equally among the $m - \sqrt{m} + 1$ alternatives in $\{b_{i^*}\} \cup C \cup D \cup E$ so that the value of agent $i$ for each such alternative is $\frac{1}{m-\sqrt{m}+1} \leq \frac{1}{\sqrt{m}}$. 

\item For every agent $i \in S_C \setminus \{i^*\}$, we set $v_{i,b_i} = 1- \frac{2}{m}$ and $v_{i,c_1}=v_{i,c_2}= \frac{1}{m}$;
the value for all other alternatives is zero.

\item For every agent $i \in S_D \setminus \{i^*\}$, we set $v_{i,b_i} = 1 -\frac{1}{\sqrt{m}} + \frac{1}{m}$ and
$v_{i,c_1}= v_{i,c_2}= v_{i,d_j} = \frac{1}{m}$ for $j \in [\sqrt{m}-3]$;
the value for all other alternatives is zero.

\item For every agent $i \in S_> \setminus \{i^*\}$, we set $v_{i,b_i} = 1$;
the value for all other alternatives is zero.

\item For every agent $i \in S_1$, we set $v_{i,b_i} = v_{i,c_1}=v_{i,c_2}= \frac{1}{\sqrt{m}}$, and split the remaining value of $1-\frac{3}{\sqrt{m}}$ equally among the $m-\sqrt{m}-2$ alternatives in $D \cup E$; 
the value of agent $i$ for the alternatives of $B \setminus \{b_i\}$ is zero.
This is a valid valuation definition since the value for each alternative in $D \cup E$ is $\frac{1 - \frac{3}{\sqrt{m}}}{m-2\sqrt{m}+1} \leq \frac{1}{\sqrt{m}} \Leftrightarrow (\sqrt{m}-1)^2 \geq 0$.
\end{itemize}
Hence, in any case, the social welfare of alternative $b_{i^*}$ is
\begin{align*}
\SW(b_{i^*} \,|\, \myv) \leq \frac{1}{\sqrt{m}}.
\end{align*}
We now distinguish between a couple more cases:
\begin{itemize}
\item If $|S_C| + |S_D| + |S_>| \geq 2$, then there exists an agent $i \in S_C \cup S_D \cup S_> \setminus \{i^*\}$ such that the social welfare of alternative $b_i$ is 
\begin{align*}
\SW(b_i \,|\, \myv) \geq 1 -\frac{1}{\sqrt{m}} + \frac{1}{m} \geq 1 -\frac{1}{\sqrt{m}},
\end{align*}
yielding distortion at least $\sqrt{m}-1$.
 
\item If $|S_C| + |S_D| + |S_>| = 1$, then since $|S_1|= n-1 = \sqrt{m}-1$, alternatives $c_1$ and $c_2$ both have social welfare 
\begin{align*}
\SW(c_1 \,|\, \myv) = \SW(c_2 \,|\, \myv) \geq (\sqrt{m}-1)\frac{1}{\sqrt{m}} = 1 - \frac{1}{\sqrt{m}}
\end{align*}
and the distortion is at least $\sqrt{m} - 1$.
\end{itemize}
The proof is now complete.
\end{proof}

\section{Conclusions and Directions for Future Research}\label{sec:conclusions}
We studied mechanisms for general single winner elections. In particular, we explored the potential of improving the distortion of deterministic ordinal mechanisms by making a \emph{limited} number of cardinal queries per agent. On this front, we obtained a definitive positive answer. As highlights of our positive results, we showed that it is possible to achieve \emph{constant} distortion by making $O(\log^2{m})$ value or comparison queries per agent, while only $O(\log m)$ value queries are enough to guarantee distortion $O(\sqrt{m})$, thus outperforming the best known randomized ordinal mechanism of~\cite{boutilier2015optimal}. Quite interestingly, our positive results for value queries hold without any normalization assumptions, which makes them even stronger.

We complemented these results by showing (nearly) tight lower bounds for many interesting cases. For one-query mechanisms we showed a linear lower bound for unrestricted valuation functions and a lower bound of $\Omega(\sqrt{m}\verythinspace)$ for unit-sum valuations. Further, for mechanisms that make $O\left(\frac{\log{m}}{\log\log{m}}\right)$ queries we showed a superconstant lower bound for unrestricted valuation functions.  

Possibly the most obvious open problem is to fill in the gaps between our upper and lower bounds. To this end, we make the following two conjectures.

\medskip

\noindent
{\bf $O(1)$\textbf{-Query Conjecture}.}
{\em There exists a mechanism that achieves a distortion of $O(\sqrt{m}\verythinspace)$ using a constant number of value queries per agent, for unit-sum or unrestricted valuation functions.}

\medskip

\noindent
{\bf $(\log{m})$\textbf{-Queries Conjecture}.}
{\em There exists a mechanism that achieves a constant distortion, using $O(\log{m})$ value queries per agent, for unit-sum or unrestricted valuation functions.}

\medskip

\noindent
We consider settling these two conjectures the most interesting problems left open in our work. Since our upper bounds for value queries do not make use of the unit-sum normalization, it is conceivable that some clever use of that extra information could possibly lead to better trade-offs.

A natural direction for future work is to consider randomization. Intriguingly, one could consider two different levels of randomization. The first level consists of mechanisms that decide randomly what queries to make to the agents, yet the winning alternative is chosen deterministically. The second level consists of mechanisms that use randomization for both querying and making the final decision. Both of these two classes of randomized mechanisms are very natural and may lead to similar distortion bounds but potentially using fewer queries. 

Our work takes a first step towards exploring how powerful ordinal mechanisms with limited access to cardinal information can actually be. Of course, the same idea can be applied to many different contexts, such as participatory budgeting, multi-winner elections, or the metric distortion setting, which has been extensively studied over the past years. As we mentioned in the introduction, \citet{abramowitz2019awareness} already take a step in this direction in the metric setting.

\appendix

\section{Missing Proofs from Subsection \ref{sec:lower-bounds_lambda}}

\subsection*{Proof of \cref{claim:general-base}}
Assume towards a contradiction that there exists an alternative $z^* \in A_1$ such that the mechanism $\M$ asks at most $\frac{\lambda}{\lambda+1} \cdot m^{\frac{1}{\lambda+1}}$ agents of $T_1(z^*)$ at the first position, and $\M$ has distortion $\mathcal{D}(\M) \not\in \Omega\left(\frac{1}{\lambda+1} \cdot m^{\frac{1}{2(\lambda+1)}}\right)$. Let $S$ be the set of the at least $\frac{1}{\lambda+1} \cdot m^{\frac{1}{\lambda+1}}$ agents of $T_1(z^*)$ that are {\em not} queried by $\M$ at the first position. Hence, we have that $\one_{i1}=1$ for every agent $i \not\in S$.

We now define two valuation profiles $\myv_1$ and $\myv_2$, which are consistent to the ordinal profile and any information revealed by the queries of $\M$, but differ on the value that the agents of $S$ have for alternative $z^*$:
\begin{itemize}
\item In both $\myv_1$ and $\myv_2$, every agent $i \not\in S$ has value $\one_{ij} \cdot m^{-\frac{j+1/2}{\lambda+1}} + (1-\one_{ij}) \cdot m^{-\frac{j}{\lambda+1}}$ for the alternative $a_{ij} \in A_j$ that she ranks at position $j \in [\lambda]$, value $m^{-1}$ for alternatives $x_1$ and $x_2$, and zero value for her tail alternatives.

\item In both $\myv_1$ and $\myv_2$, every agent $i \in S$ has value $\one_{ij} \cdot m^{-\frac{j+1/2}{\lambda+1}} + (1-\one_{ij}) \cdot m^{-\frac{j}{\lambda+1}}$ for the alternative $a_{ij} \in A_j$ that she ranks at position $j \in [\lambda] \setminus \{1\}$, value $m^{-1}$ for alternatives $x_1$ and $x_2$, and zero value for her tail alternatives. 

\item In $\myv_1$, every agent $i \in S$ has value $m^{-\frac{3/2}{\lambda+1}}$ for $z^*$.
\item In $\myv_2$, every agent $i \in S$ has value $1$ for $z^*$.
\end{itemize}
Given the definition of these two valuation profiles, it is easy to compute the social welfare of the alternatives:
\begin{itemize}
\item The social welfare of every alternative $z \in Y$ is
\begin{align*}
\SW(z \,|\, \myv_1) = \SW(z \,|\, \myv_2) = 0.
\end{align*}

\item The social welfare of alternatives $x_1$ and $x_2$ is
\begin{align*}
\SW(x_1 \,|\, \myv_1) = \SW(x_1 \,|\, \myv_2) = \SW(x_2 \,|\, \myv_1) = \SW(x_2 \,|\, \myv_2) = m \cdot m^{-1}=1.
\end{align*}

\item The social welfare of any alternative $z \in \cup_{j \in [\lambda]} A_j \setminus \{z^*\}$ is 
\begin{align*}
\SW(z \,|\, \myv_1) = \SW(z \,|\, \myv_2) 
&= \frac{\lambda}{\lambda+1} \cdot m^{\frac{j}{\lambda+1}} \cdot m^{-\frac{j+1/2}{\lambda+1}} + \frac{1}{\lambda+1} \cdot m^{\frac{j}{\lambda+1}}  \cdot m^{-\frac{j}{\lambda+1}} \\
&= \frac{\lambda}{\lambda+1} \cdot m^{-\frac{1}{2(\lambda+1)}} + \frac{1}{\lambda+1}
\leq 1.
\end{align*}

\item The social welfare of alternative $z^*$ is
\begin{align*}
&\SW(z^* \,|\, \myv_1) = \frac{\lambda}{\lambda+1} \cdot m^{\frac{1}{\lambda+1}} \cdot m^{-\frac{1 + 1/2}{\lambda+1}} + \frac{1}{\lambda+1} \cdot m^{\frac{1}{\lambda+1}} \cdot m^{-\frac{3/2}{\lambda+1}} = m^{-\frac{1}{2(\lambda+1)}}, \\
&\SW(z^* \,|\, \myv_2) = \frac{\lambda}{\lambda+1} \cdot m^{\frac{1}{\lambda+1}} \cdot m^{-\frac{1+1/2}{\lambda+1}} + \frac{1}{\lambda+1} \cdot m^{\frac{1}{\lambda+1}} \cdot 1
= \frac{\lambda}{\lambda+1} \cdot m^{-\frac{1}{2(\lambda+1)}} + \frac{1}{\lambda+1} \cdot m^{\frac{1}{\lambda+1}}.
\end{align*}
\end{itemize}

Depending on the choices of the mechanism $\M$, we set the valuation profile to be either $\myv_1$ or $\myv_2$ so that the distortion is as high as possible. In particular, we have:
\begin{itemize}
\item If $\M$ selects alternative $z^*$, we set the valuation profile to be $\myv_1$. Hence, the social welfare of the winner $z^*$ is $m^{-\frac{1}{2} \cdot \frac{1}{\lambda+1}}$, while any alternative of $X$ is optimal with social welfare $1$, yielding distortion $m^{\frac{1}{2(\lambda+1)}}$.

\item If $\M$ selects some alternative $z \in X \cup_{j \in [\lambda]} A_j \setminus \{z^*\}$, we set the valuation profile to be $\myv_2$. Hence, the social welfare of the winner $z$ is at most $1$, while alternative $z^*$ is optimal with social welfare at least $\frac{1}{\lambda+1} \cdot m^{\frac{1}{\lambda+1}}$, yielding distortion at least $\frac{1}{\lambda+1} \cdot m^{\frac{1}{\lambda+1}}$.
\end{itemize}
In any case, the distortion is $\Omega\left(\frac{1}{\lambda+1} \cdot m^{\frac{1}{2(\lambda+1)}}\right)$ and the proof of the claim follows.
\hfill 
$\qed$
\medskip

\subsection*{Proof of \cref{claim:general-induction}}
For every alternative $w \in A_{j+1}$, let $S_w \subseteq T_{j+1}(w)$ be the set of agents that rank $w$ at position $(j+1)$ and are queried by $\M$ at the first $j$ positions. 
By the definition of the ordinal profile, the set $T_{j+1}(w)$ consists of $m^{\frac{j+1}{\lambda+1}}$ agents. These agents are partitioned into $m^{\frac{1}{\lambda+1}}$ sets of size $m^{\frac{j}{\lambda+1}}$ so that the agents of each such set all rank the same alternative of $A_j$ at position $j$.
Therefore, by the assumption of the claim, we have that $|S_w| > m^{\frac{1}{\lambda+1}} \cdot \left(1-\frac{j}{\lambda+1}\right)\cdot m^{\frac{j}{\lambda+1}} = \left(1-\frac{j}{\lambda+1}\right) \cdot m^{\frac{j+1}{\lambda+1}}$.

Now, assume towards a contradiction that there exists an alternative $w^* \in A_{j+1}$ such that $\M$ queries at most $(1-\frac{j+1}{\lambda+1})\cdot m^{\frac{j+1}{\lambda+1}}$ of the agents in $S_{w^*}$ at the first $(j+1)$ positions, and $\mathcal{D}(\M) \not\in \Omega\left(\frac{1}{\lambda+1} \cdot m^{\frac{1}{2(\lambda+1)}}\right)$. 
Let $S$ be the set of the agents in $S_{w^*}$ that are not queried by $\M$ at position $(j+1)$.  
By our discussion so far, we have that $|S| \geq |S_{w^*}| - \left(1-\frac{j+1}{\lambda+1} \right)\cdot m^{\frac{j+1}{\lambda+1}} > \frac{1}{\lambda+1} \cdot m^{\frac{j+1}{\lambda+1}}$, and therefore $\one_{i, j+1}=1$ for every agent $i \not\in S$.

We now define two valuation profiles $\myv_1$ and $\myv_2$, which are consistent to the ordinal profile and any information revealed by the queries of $\M$, but differ on the value that the agents of $S$ have for alternative $w^*$:
\begin{itemize}
\item In both $\myv_1$ and $\myv_2$, every agent $i \not\in S$ has value $\one_{i\ell} \cdot m^{-\frac{\ell+1/2}{\lambda+1}} + (1-\one_{i\ell}) \cdot m^{-\frac{\ell}{\lambda+1}}$ for the alternative $a_{i\ell} \in A_\ell$ that she ranks at position $\ell \in [\lambda]$, value $m^{-1}$ for alternatives $x_1$ and $x_2$, and zero value for her tail alternatives.

\item In both $\myv_1$ and $\myv_2$, every agent $i \in S$ has value $\one_{i\ell} \cdot m^{-\frac{\ell+1/2}{\lambda+1}} + (1-\one_{i\ell}) \cdot m^{-\frac{\ell}{\lambda+1}}$ for the alternative $a_{i\ell} \in A_\ell$ that she ranks at position $\ell \in [\lambda] \setminus \{j+1\}$, value $m^{-1}$ for alternatives $x_1$ and $x_2$, and zero value for her tail alternatives. 

\item In $\myv_1$, every agent $i \in S$ has value $m^{-\frac{j+3/2}{\lambda+1}}$ for $w^*$.
\item In $\myv_2$, every agent $i \in S$ has value $m^{-\frac{j+1/2}{\lambda+1}}$ for $w^*$.
\end{itemize}
Given the definition of these two valuation profiles, it is easy to compute the social welfare of the alternatives:
\begin{itemize}
\item The social welfare of every alternative $z \in Y$ is
\begin{align*}
\SW(z \,|\, \myv_1) = \SW(z \,|\, \myv_2) = 0.
\end{align*}

\item The social welfare of alternatives $x_1$ and $x_2$ is
\begin{align*}
\SW(x_1 \,|\, \myv_1) = \SW(x_1 \,|\, \myv_2) = \SW(x_2 \,|\, \myv_1) = \SW(x_2 \,|\, \myv_2) = m \cdot m^{-1}=1.
\end{align*}

\item The social welfare of any alternative $z \in \cup_{\ell \in [\lambda]} A_\ell \setminus \{w^*\}$ is 
\begin{align*}
\SW(z \,|\, \myv_1) = \SW(z \,|\, \myv_2) 
&= \frac{\lambda}{\lambda+1} \cdot m^{\frac{\ell}{\lambda+1}} \cdot m^{-\frac{\ell+1/2}{\lambda+1}} + \frac{1}{\lambda+1} \cdot m^{\frac{\ell}{\lambda+1}}  \cdot m^{-\frac{\ell}{\lambda+1}} \\
&= \frac{\lambda}{\lambda+1} \cdot m^{-\frac{1}{2(\lambda+1)}} + \frac{1}{\lambda+1}
\leq 1.
\end{align*}

\item The social welfare of alternative $w^*$ is
\begin{align*}
\SW(w^* \,|\, \myv_1) & = \frac{\lambda}{\lambda+1} \cdot m^{\frac{j+1}{\lambda+1}} \cdot m^{-\frac{j+1+1/2}{\lambda+1}} + \frac{1}{\lambda+1} \cdot m^{\frac{j+1}{\lambda+1}} \cdot m^{-\frac{j+3/2}{\lambda+1}} = m^{-\frac{1}{2(\lambda+1)}}, \\
\SW(w^* \,|\, \myv_2) &= \frac{\lambda}{\lambda+1} \cdot m^{\frac{j+1}{\lambda+1}} \cdot m^{-\frac{j+1 + 1/2}{\lambda+1}} + \frac{1}{\lambda+1} \cdot m^{\frac{j+1}{\lambda+1}} \cdot m^{-\frac{j+1/2}{\lambda+1}} \\
&= \frac{\lambda}{\lambda+1} \cdot m^{-\frac{1}{2(\lambda+1)}} + \frac{1}{\lambda+1} \cdot m^{\frac{1}{2(\lambda+1)}}.
\end{align*}
\end{itemize}

Depending on the choices of the mechanism $\M$, we set the valuation profile to be either $\myv_1$ or $\myv_2$ so that the distortion is as high as possible. In particular, we have:
\begin{itemize}
\item If $\M$ selects alternative $w^*$, we set the valuation profile to be $\myv_1$. Hence, the social welfare of the winner $w^*$ is $m^{-\frac{1}{2(\lambda+1)}}$, while any alternative of $X$ is optimal with social welfare $1$, yielding distortion $m^{\frac{1}{2(\lambda+1)}}$.

\item If $\M$ selects some alternative $z \in X \cup_{\ell \in [\lambda]} A_\ell \setminus \{w^*\}$, we set the valuation profile to be $\myv_2$. Hence, the social welfare of the winner $z$ is at most $1$, while alternative $w^*$ is optimal with social welfare at least $\frac{1}{\lambda+1} \cdot m^{\frac{1}{2(\lambda+1)}}$, yielding distortion at least $\frac{1}{\lambda+1} \cdot m^{\frac{1}{2(\lambda+1)}}$.
\end{itemize}
In any case, the distortion is $\Omega\left(\frac{1}{\lambda+1} \cdot m^{\frac{1}{2(\lambda+1)}}\right)$ and the proof of the claim follows.
\hfill 
$\qed$

\bibliographystyle{plainnat}
\bibliography{references}

\end{document}